\newcommand{\remove}[1]{}
\newcommand{\comments}[1]{}
\newtheorem{definition}{Definition}
\newtheorem{corollary}{Corollary}
\newtheorem{lemma}{Lemma}
\newtheorem{theorem}{Theorem}
\newtheorem{proposition}{Proposition}
\begin{document}

\title{Decision Trees for Function Evaluation\\
Simultaneous Optimization of   
Worst and Expected Cost\footnote{A preliminary version of this paper was accepted for presentation at ICML 2014}}

\author{Ferdinando Cicalese\\  University of Salerno, Italy\\
{\tt cicalese@dia.unisa.it} \and 
Eduardo Laber \\PUC-Rio, Brazil\\ {\tt laber@inf.puc-rio.br} \and Aline Medeiros Saettler\\ 
PUC-Rio, Brazil\\ {\tt alinemsaettler@gmail.com}}


\maketitle

\vskip 0.3in

\begin{abstract}
In several applications of automatic diagnosis  and active learning a central 
problem is the evaluation of a discrete function by adaptively querying the values of its variables until 
the values read uniquely determine the value of the function.  
In general, the process of reading the value of a variable 
might involve some cost, computational or even a fee to be paid for the experiment required for obtaining the value.
This cost should be taken into account when deciding the next variable to read. 
The goal is to 
design a strategy for evaluating the function incurring little cost (in the worst case or in expectation 
according to a prior distribution on the possible variables' assignments).

Our algorithm builds a strategy (decision tree) which attains a logarithmic approximation simultaneously for the 
expected and worst cost spent. 
This is best possible under the assumption that ${\cal P} \neq {\cal NP}$.

\end{abstract}

\section{Introduction}

In order to introduce the problem we analyze in the paper, let us start with some motivating examples.

In high frequency trading, an automatic agent 
decides the next action to be performed  as sending or canceling a  buy/sell order,
on the basis of some market variables as well
as private variables (e.g.,  stock price, traded volume,  volatility,
order books distributions as well as complex relations among these variables).
For instance in  \cite{conf/icml/NevmyvakaFK06} the trading strategy is learned in the form
of a discrete function, described as a table, that has to be evaluated whenever a new scenario is faced and 
an action (sell/buy) has to be taken. 
The rows of the table represent the possible scenarios of the market and the columns represent the variables taken into 
account by the agent to distinguish among the different scenarios. For each scenario, there is an associated action.
Every time an action need to be taken, the agent can identify the scenario by computing the  value of each single variable and proceed with the
associated action. However, recomputing all the variable every time might be very expensive. 
By taking into account the structure of the function/table together with 
information on the probability distribution on the scenarios of the market and also the fact that some variables are more
expensive (or time consuming) to calculate than others, the algorithm could limit itself to recalculate only some variables whose values 
determine the action to be taken. Such an approach  can significantly speed up the evaluation of the function. 
Since market conditions change on a millisecond basis, being able to react  very quickly to a new scenario is the key to a profitable strategy.

In a classical Bayesian active learning problem, the task is to select the right hypothesis from a possibly very large set 
${\cal H} = \{h_1,\ldots,h_n\}.$ Each $h \in {\cal H}$ is a mapping from a set ${\cal X}$ called the query/test space 
to the set (of labels) $\{1, \dots, \ell\}.$ It is assumed that the functions in ${\cal H}$ are unique, i.e., for each pair of them there is 
at least one point in ${\cal X}$ where they differ. There is one function $h^* \in {\cal H}$ which provides the correct labeling of the 
space ${\cal X}$ and the task is to identify it through queries/tests.  A query/test coincides with an element 
$x \in {\cal X}$ and the result is the  value $h^*(x).$ Each test $x$ has an associated cost $c(x)$ that must be paid in order to 
acquire the response $h^*(x),$ 
since the process of  labeling an example may  be expensive either in terms of time or money (e.g. annotating a document).
The goal is to identify the correct hypothesis spending  as little as possible. For instance, 
in  automatic diagnosis, $\cal H$ represents the set of possible diagnoses and $\cal X$ the set of 
symptoms or medical tests, with $h^*$ being the exact diagnosis that has to be achieved by reducing the cost of the examinations.

In \cite{bellala}, a more general variant of the problem was considered where rather than the diagnosis it is important to 
identify the therapy (e.g., for cases of poisoning it is important to quickly understand which antidote to administer rather than  
identifying the exact poisoning). This problem can be modeled by defining a partition $\cal P$ on $\cal H$ with each class of 
$\cal P$  representing the subset of diagnoses which requires the same therapy. The problem is then how to identify the class
of the exact $h^*$ rather than $h^*$ itself.   
This model has also been studied by Golovin et al.\  \cite{golovin} to tackle the problem of erroneous tests' responses 
in Bayesian active learning.

The above examples can all be cast into  the following general problem. \\

\noindent
{\bf The Discrete Function Evaluation Problem} (DFEP). An instance of the problem is defined by  a quintuple  $(S, C, T, {\bf p}, {\bf c}),$ 
where $S = \{s_1, \dots, s_n\}$ is a set of objects, 
$C = \{C_1, \dots, C_m\}$ is a partition of $S$ into $m$ classes, $T$ is a set of tests, ${\bf p}$ is a probability distribution on 
$S,$ and ${\bf c}$ is a cost function assigning to each test $t$ a cost $c(t) \in \mathbb{N^+}.$ 
A test $t \in T$,  when applied to an object  $s \in S$, incurs a cost $c(t)$ and  outputs a number $t(s)$ in the set  $\{1,\ldots,\ell\}$.
It is assumed that the set of tests is complete, in the sense that for any distinct $s_1, s_2 \in S$ there exists a test
$t$ such that $t(s_1) \neq t(s_2).$ 
The goal is to define a testing procedure which uses tests from $T$ and minimizes the testing cost 
(in expectation and/or in the worst case)
for identifying the class of an unknown object $s^*$ chosen according to the distribution ${\bf p}.$

The DFEP can be rephrased in terms of minimizing the cost of evaluating a discrete function that maps points (corresponding to objects) from some finite subset of $\{1,\ldots,\ell\}^{|T|}$ into
values (corresponding to classes), where  an object $s\in S$ corresponds to the point $(t_1(s),\ldots,t_{|T|}(s))$ 
 obtained by applying each test of  $T$ to $s$.  
This perspective motivates the name we chose for the problem. However, for the sake of uniformity with more recent work \cite{golovin,bellala} we employ the definition of the problem in terms of objects/tests/classes. \\

\noindent
{\bf Decision Tree Optimization.} Any testing procedure can be represented by a \emph{decision tree}, which is a tree
where every internal node is associated with a test and every leaf is associated with a set
of objects that belong to the same class.
More formally, a decision tree $D$ for $(S,C,T,\mathbf{p}, {\bf c})$  is a leaf
associated with class $i$ 
if every object of $S$ belongs to the same class $i$. Otherwise, the root $r$  of $D$ is associated with some test $t \in  T$ and  
the children of $r$ are  decision trees for the sets $\{S_{t}^1,...,S_{t}^{\ell}\}$,
where $S_{t}^i$, for $i=1,\ldots, \ell$, is the subset of $S$ that outputs $i$ for test $t$.
 
Given a decision tree $D$, rooted at $r$, we can identify the class of an unknown object 
$s^*$ by following a path from  $r$  to a leaf as follows:
first, we ask for the result of the test associated with $r$ when performed on $s^*$; then,  
we follow the branch of $r$  associated with the result of the test  to
reach a child $r_i$ of $r$; next,  we apply the same steps recursively for the decision tree rooted at $r_i$.
The procedure ends when a leaf is reached, which determines the class of $s^*$.

We define $cost(D,s)$ as the sum of the tests' cost on the root-to-leaf path from the root of 
$D$ to the leaf associated with object $s$. 
Then, the \emph{worst testing cost}  and the \emph{expected testing cost} of $D$ are, respectively, defined as

\begin{equation}
cost_W(D) = \max_{s \in S}\{cost(D,s)\} 
\end{equation}

\begin{equation}
 cost_E(D) = \sum_{s \in S} cost(D,s) p(s)
\end{equation}

Figure \ref{fig:decisiontree0} shows an instance of the DFEP and a decision tree for it. The tree has 
 worst testing cost  $1 + 3 + 2 = 6$ and  expected testing cost  $(1\times 0.1)+(6\times  0.2)+(6 \times  0.4)+(4 \times 0.3) = 4.9$.
\begin{figure}
\begin{center}
\includegraphics[width=110mm]{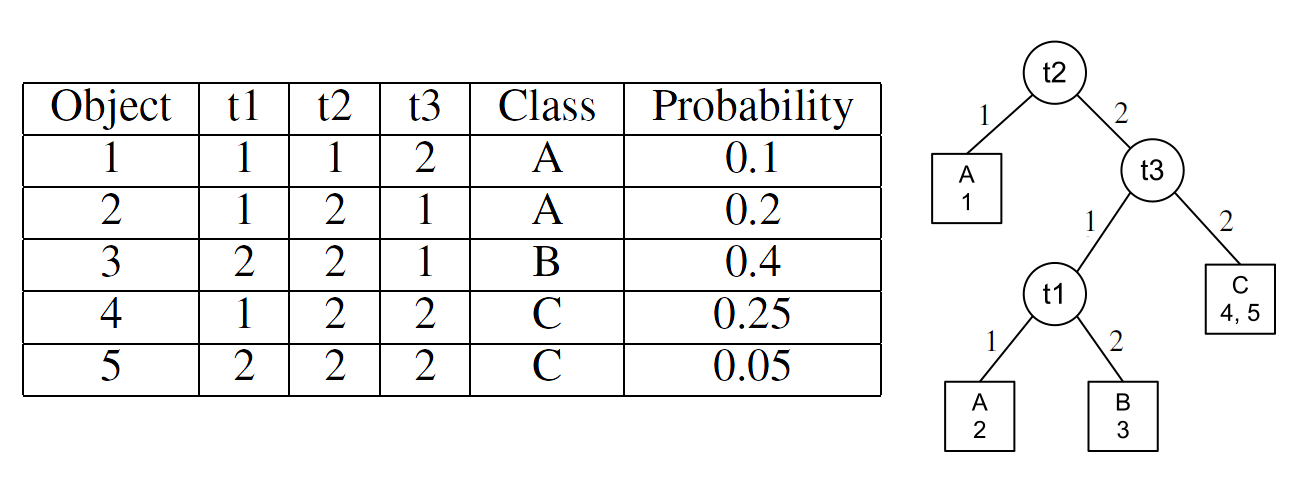}
\label{fig:decisiontree0}
  \caption{Decision tree for 5 objects presented in the table in the left, with $c(t1) = 2$, $c(t2) = 1$ and $c({t3}) = 3$. Letters and 
numbers in the leaves indicate, respectively, classes and objects. 
}
\end{center}
\vskip -0.3in
\end{figure}

\noindent {\bf Our Results.} 
Our main result 
is an algorithm that builds a decision tree whose expected testing cost and worst testing cost are 
at most $O(\log n)$ times the minimum possible expected testing cost and the minimum possible worst testing cost, respectively.
In other words, the decision tree built by our algorithm achieves simultaneously  
the best possible approximation achievable with respect to both the
expected testing cost and the worst testing cost. In fact, for 
the special case where each object defines a distinct class---known as the {\em identification problem}---  
both the minimization of the 
expected testing cost and the minimization of the worst testing cost do not admit a 
sub-logarithmic approximation unless $P=NP,$ as shown in \cite{pandit} and in \cite{laber2}, respectively.
In addition, in Section \ref{sec:inapprox}, we show that the same inapproximability results holds in general  
for the case of exactly $m$ classes for any $m \geq 2.$

It should be noted that in general there are instances
for which the decision tree that minimizes
the expected testing cost has worst testing cost much larger than that achieved by the decision tree with minimum worst testing cost.
Also there are instances where the converse  happens.  
Therefore, it is reasonable to ask whether it is possible to 
construct decision trees that are efficient with respect to both performance criteria.
This might be important in practical applications where only an estimate of  the probability distribution is available which is
not very accurate. Also, in medical applications like the one depicted in \cite{bellala}, very high cost (or equivalently significantly time consuming therapy identification) might have disastrous/deadly consequences.
In such cases, besides being able to minimize the expected testing cost, it is important to guarantee that the worst testing cost also is not large 
(compared with the optimal worst testing cost).

With respect to the minimization of the expected testing cost, our result improves upon the previous $O(\log 1/p_{\min})$ approximation shown in 
\cite{golovin} and  \cite{bellala}, where  $p_{min}$ is the minimum positive probability among the objects in $S$.
From the result in these papers an 
$O(\log n)$ approximation 
could be attained only for the particular case of uniform costs via a technique used in \cite{KosPrzBor99}. 

From a high-level perspective, our method closely follows the one used by Gupta {\em et al.} \cite{gupta} 
for obtaining the $O(\log n)$ approximation for the expected testing cost in  the identification problem.
Both constructions of the decision tree consist of building a path (backbone) that splits the input instance into smaller ones, for which decision trees are recursively constructed and attached as children of the nodes in the path. 
 
A closer look, however, reveals that our algorithm is much simpler than the one presented in \cite{gupta}. First, it is more transparently linked to the structure of the problem, which remained somehow hidden in \cite{gupta} where the result was obtained via an involved mapping from adaptive TSP. Second, our algorithm avoids expensive computational steps as the Sviridenko procedure \cite{Sviridenko} and some non-intuitive/redundant steps that are used to select the tests for the backbone of the tree. In fact, we believe that providing an algorithm that is much simpler to implement and an alternative proof of the result in \cite{gupta} is an additional contribution of this paper.

\medskip

\noindent {\bf State of the art.}
The DFEP has been recently studied under the names of
class equivalence problem \cite{golovin} and group identification problem \cite{bellala} and long before it had been 
described  in the  excellent survey by Moret \cite{More82}.
Both \cite{golovin} and  \cite{bellala} give  $O(\log (1/p_{min}))$ approximation algorithms for the version of the DFEP
where  the expected testing cost has to be minimized and
both the probabilities and the testing costs are non-uniform.
In addition, when the testing costs are uniform both algorithms can be converted into a $O( \log n)$
approximation algorithm via Kosaraju approach \cite{KosPrzBor99}.
The algorithm in \cite{golovin} is more general because it addresses multiway tests rather than binary ones.
For the minimization of the worst testing cost, Moshkov has studied the problem in the general case of multiway tests and non-uniform costs and provided 
an $O(\log n)$-approximation in \cite{Moshkov2}. In the same paper it is also proved that 
no $o(\log n)$-approximation algorithm is possible under standard the complexity assumption
$NP \not \subseteq DTIME(n^{O(\log \log n)}).$ The minimization of the worst testing cost is also investigated in \cite{conf/icml/GuilloryB11} 
under the framework of covering and learning.

\medskip

The particular case of  the DFEP where each object
belongs to a different class---known  as the {\em identification problem}---has been more extensively investigated
\cite{dasgupta1,adler,pandit,pandit2}.
Both the minimization of the worst and the expected testing cost
do not admit a sublogarithmic approximation unless $P=NP$ as proved by \cite{laber2} and
\cite{pandit}. For the expected testing cost, 
in the variant  with multiway tests, non uniform  probabilities and  non uniform testing costs,
an  $O(\log (1/p_{min)})$ approximation is given by Guillory and Blimes in  \cite{guillory}.
Gupta {\em et al.} \cite{gupta}  improved this result to $O(\log n)$ employing new techniques not relying on 
the Generalized Binary Search (GBS)---the basis of all the previous strategies.

An $O(\log n)$ approximation algorithm for the minimization of the worst testing cost for the identification problem has been given by 
Arkin et.\ al.\ \cite{Arkin} for  binary tests and uniform cost and by Hanneke \cite{hanneke}
for case with
mutiway tests and non-uniform testing costs. 

\medskip

In the case of Boolean functions, the DFEP is also known as Stochastic Boolean Function Evaluation (SBFE), where the distribution 
over the possible assignments is a product distribution defined by assuming that variable $x_i$ has a given probability $p(x_i)$ of being one independently 
of the value of the other variables. Another difference with respect to the DFEP as it is presented here, is that in 
Stochastic Boolean Function Evaluation the common assumption is that the complete set of associations between the assignments of the variables
and the value of the function is provided, directly or via a representation of the function, e.g., in terms of its DNF or CNF. The present 
definition of DFEP considers the more general problem where only a sample of the Boolean function is given and from this we want to construct 
a decision tree with minimum expected costs and that exactly fits the sample. 

Results on the exact solution of the SBFE for different classes of Boolean functions 
can be found in the survey paper \cite{Unluyurt}. 
In a  recent paper  Deshpande et al. \cite{Hell}, provide a $3$-approximation algorithm for evaluating Boolean linear threshold formulas and 
an $O(\log k d)$ approximation algorithm for the evaluation of CDNF formulas, where
$k$ and $d$ is  the number of clauses of the input CNF and $d$ is the number of terms of the input DNF. The same result had been previously 
obtained by Kaplan et al. \cite{kaplan} for the case of monotone formulas and uniform distribution (in a slightly different setting).  
Both algorithms of \cite{Hell} are based on reducing the problem to Stochastic Submodular Set Cover introduced by Golovin and Krause \cite{golovin2}
and providing a new algorithm for this latter problem.

Other special cases of the DFEP like the evaluation of AND/OR trees (a.k.a.\ read-once formulas) and the evaluation of Game Trees 
(a central task in the design of game procedures) are discussed in \cite{mmt/Tars83a,DBLP:conf/focs/SaksW86,greiner-etal:2006a}. 
In \cite{CharikarEtAl02a}, Charikar {\em et al.} considered discrete function evaluation from the perspective of 
competitive analysis; results in this alternative setting are also given in \cite{kaplan,DBLP:journals/jacm/CicaleseL11}.

\section{Preliminaries}

Given an instance $I = (S, C, T, {\bf p}, {\bf c})$ of the DFEP, we will denote by $OPT_E(I)$ ($OPT_W(I)$) the 
expected testing cost (worst testing cost) of a decision tree with minimum possible expected testing cost (worst testing cost) over the instance 
$I.$ When the instance $I$ is clear from the context, we will also use the notation $OPT_W(S)$ ($OPT_E(S)$) for the above quantity, 
referring only to the set of objects involved.
We use $p_{min}$ to denote the smallest non-zero  probability among the objects in $S$.

Let $(S,T,C,{\bf p}, {\bf c})$ be an instance
of DFEP and let  $S'$ be a subset of $S$.
In addition, let $C', \, \mathbf{p}'$ and $\mathbf{c}'$ be, respectively, the restrictions of $C, \, \mathbf{p}$
and ${\bf c}$ to the set $S'$.
Our first observation is that every decision tree
$D$ for  $(S,C,T,\mathbf{p}, {\bf c})$ is also a decision tree
for the instance $I'=(S',C',T,\mathbf{p}', {\bf c}')$. 
The following proposition immediately follows.

\begin{proposition}
\label{prop:Subadditivity}
Let $I=(S,C,T,\mathbf{p}, {\bf c})$ be an instance
of the DFEP and let  $S'$ be a subset of $S$.
Then, $OPT_E(I') \leq OPT_E(I)$ and
$OPT_W(I') \leq OPT_W(I),$ where $I'=(S', C',T,\mathbf{p}', {\bf c}')$ is the restriction of 
$I$ to $S'$.
\end{proposition}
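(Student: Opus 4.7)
The plan is simply to exploit the observation already made immediately before the proposition statement: any decision tree $D$ that is a valid decision tree for $I$ remains a valid decision tree for the restricted instance $I'$. Using this, I would pick an optimal tree for $I$, show it is feasible for $I'$, and then argue that its cost measured on $I'$ cannot exceed its cost measured on $I$. The two inequalities follow immediately.

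For the expected-cost inequality, let $D^*_E$ be a tree achieving $OPT_E(I)$. Since $D^*_E$ is also a decision tree for $I'$, we have $OPT_E(I') \leq cost_E(D^*_E)$ evaluated on $I'$. But
\[
\sum_{s \in S'} cost(D^*_E, s)\, p(s) \ \leq\ \sum_{s \in S} cost(D^*_E, s)\, p(s) \ =\ OPT_E(I),
\]
where the inequality uses $S' \subseteq S$ together with the non-negativity of each summand (test costs are positive integers and probabilities are non-negative). Chaining the two bounds gives $OPT_E(I') \leq OPT_E(I)$.

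For the worst-cost inequality, take $D^*_W$ achieving $OPT_W(I)$ and argue analogously: $D^*_W$ is feasible for $I'$, and since $S' \subseteq S$,
\[
\max_{s \in S'} cost(D^*_W, s) \ \leq\ \max_{s \in S} cost(D^*_W, s) \ =\ OPT_W(I),
\]
so $OPT_W(I') \leq OPT_W(I)$.

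The only point deserving a brief justification is the feasibility claim, namely that an optimal tree for $I$ really is a valid decision tree for $I'$. This follows directly from the recursive definition of a decision tree: a tree that correctly routes every object of $S$ to a leaf labeled with its class in particular correctly routes every object of $S' \subseteq S$; the tree may simply contain subtrees that are more refined than strictly necessary for $I'$, but this is permitted by the definition. Thus there is no genuine obstacle here — the proof is essentially a bookkeeping exercise once feasibility is noted, which is presumably why the authors label the statement as an immediate consequence of the preceding observation.
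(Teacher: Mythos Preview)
Your argument is correct and is exactly the approach the paper intends: the authors do not give a separate proof but state that the proposition ``immediately follows'' from the observation that any decision tree for $I$ is also a decision tree for $I'$, which is precisely what you spell out.
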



One of the measures of progress of our strategy is expressed in terms of the number of pairs of objects belonging to different classes which are present in the set of objects satisfying 
the tests already performed. The following definition formalizes this concept of pairs for a given set of objects.  
\begin{definition}[Pairs]
Let $I=(S,T,C,{\bf p}, {\bf c})$ be an instance of the DFEP and  $G \subseteq S.$  
We say that two objects $x, y \in S$ constitute a pair  of $G$ if they both belong to $G$ but come from different classes.
We denote by $P(G)$ the number of pairs of $G.$ In formulae, we have
$$P(G) = \sum_{i=1}^{m-1} \sum_{j=i+1}^m n_i(G) n_j(G)$$ 
where for $1\leq i \leq m$  and $A \subseteq S,$  $n_{i}(A)$ 
denotes the number of objects in $A$ belonging to class $C_{i}.$
\end{definition}
As an example, for the set of objects $S$ in Figure \ref{fig:decisiontree0} we have $P(S) = 8$ and the following set of pairs 
$\displaystyle{\{(1,3),(1,4),(1,5),(2,3),(2,4),(2,5),(3,4),(3,5)\}}.$ 

We will use $s^*$ to denote the initially unknown object whose class we want to identify.
Let ${\bf t}$ be a sequence of tests  applied to identify the class of $s^*$ (it corresponds to a path in the decision tree) and let $G$ be the set of objects that agree with the outcomes of all tests in ${\bf t}$.
If $P(G) = 0$,
then all objects in $G$  belong to the same class, which  must coincide with the class of the selected object $s^*$. 
Hence, $P(G) = 0$ indicates the identification of the class of the object $s^*.$ Notice that $s^*$ might still be unknown when the condition 
$P(G) = 0$ is reached.

For each  test $t \in T$ and for each $i=1,\ldots,\ell$, let $S^i_t \subseteq S$ be the set of objects for which the outcome of test $t$ is $i.$
For a test $t,$ the outcome resulting in the largest number of pairs is of special interest for our strategy. 
We denote with  $S^*_t$ the set among $S^1_t, \dots, S^{\ell}_t$ such that $P(S^*_t)=\max\{P(S^1_t),\ldots, P(S^{\ell}_t)\}$ (ties are broken arbitrarily).
We denote with $\sigma_S(t)$ the set of objects not included in $S^*_t,$ i.e., we define $\sigma_S(t)= S \setminus S^*_t$.
Whenever $S$ is clear from the context we use $\sigma(t)$ instead of $\sigma_S(t)$.

Given a set of objects $S$, each test produces a tripartition of the pairs in $S$: the ones with both objects in $\sigma(t),$ those with
both objects in $S^*_t$ and those
with one object in  $\sigma(t)$ and one object in $S^*_t.$
We say that the pairs in  $\sigma(t)$ are {\em kept} by $t$ and the pairs with one object from 
$\sigma(t)$ and one object from $S^*_t$ are {\em separated} by $t.$
We also say that  a pair is {\em covered} by the test $t$ if it is either kept or separated by $t.$
Analogously, we say that a test $t$ covers an object $s$ if $s \in \sigma(t)$.


For any set of objects $Q \subseteq S$  the probability of $Q$ is $p(Q) = \sum_{s \in Q} p(s).$

\section{Logarithmic approximation for the Expected Testing Cost and the Worst Case Testing Cost} \label{sec:expcost}

In this section, we describe our algorithm {\tt DecTree} and analyze its performance. 
The concept of the separation cost of a sequence of tests will turn useful for defining and analyzing our algorithm.

\noindent
{\bf The separation cost of a sequence of tests.} 
Given an instance $I = (S, C, T, {\bf p}, {\bf c})$ of the DFEP, for a sequence of tests ${\bf t} = t_1, t_2, \dots, t_q,$ we define the 
separation cost of ${\bf t}$ in the instance $I,$ denoted by 
$sepcost(I, {\bf t}),$ as follows:
%
Fix an object $x.$ If there exists $j < q$ such that $x \in \sigma(t_j)$ then we set $i(x) = \min\{j \mid x \in \sigma(t_j)\}.$  
If $x  \not \in \sigma(t_j)$ for each $j=1, \dots, q-1,$ then we set  $i(x) = q.$ 
Let  $sepcost(I, {\bf t}, x)  = \sum_{j=1}^{i(x)} c(t_j)$ denote  the {\em cost of separating $x$ in the instance $I$ by means of the sequence} ${\bf t}.$ 
Then, the {\em separation cost of ${\bf t}$} (in the instance $I$) is defined by 
\begin{equation} \label{eq:sepcost2}
sepcost(I,{\bf t}) = \sum_{s \in S} p(s) sepcost(I, {\bf t}, s).
\end{equation}

In addition, we define $totcost(I, {\bf t})$ as the total cost of the sequence ${\bf t}$, i.e., 
$$totcost(I, {\bf t}) = \sum_{j=1}^q c(t_j).$$

\noindent
{\bf Lower bounds on the cost of an optimal decision tree for the DFEP.}
We denote by  $sepcost^*(I)$  the minimum separation cost in $I$ attainable by a 
sequence of tests in $T$ which covers all the pairs in $S$
and $totcost^*(I)$ as the minimum total cost  attainable by a 
sequence of tests in $T$ which covers all the pairs in $S.$

The following theorem shows lower bounds on both the expected testing cost and the worst case testing cost of  any instance $I = (S, C, T, {\bf p}, {\bf c})$ of the DFEP.

\begin{theorem} \label{theo:lowerbound}
For any instance $I = (S, C, T, {\bf p}, {\bf c})$  of the DEFP, it holds that 
$sepcost^*(I) \leq OPT_E(I)$ and  $totcost^*(I) \leq OPT_W(I).$
\end{theorem}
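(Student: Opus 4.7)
My plan is to prove both inequalities with a uniform construction: from an optimal decision tree $D^*$ I extract a single (partial) root-to-leaf path $P^*$ whose sequence of tests $\mathbf{t} = t_1, \ldots, t_q$ will serve as a covering sequence with the appropriate cost bound. Starting at the root, at each internal node $v$ with test $t$ I descend into the child of $v$ corresponding to the globally heaviest outcome $i^*$ satisfying $S_t^{i^*} = S_t^*$. If this particular child is absent at $v$, that is $S_v \cap S_t^* = \emptyset$, I stop $P^*$ at $v$ but still include $t$ as the last test $t_q$ of the sequence.

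The first key claim is that $\mathbf{t}$ covers every pair $(x,y)$ of different classes. Suppose toward contradiction that no $t_j$ covers $(x,y)$; then $x, y \in S_{t_j}^*$ for every $j$, so at each internal node of $P^*$ both $x$ and $y$ go to the child that $P^*$ chose and hence both follow $P^*$ all the way to its endpoint. If $P^*$ reached a leaf, $x$ and $y$ would land in the same class, contradicting the assumption; and if $P^*$ stopped internally at some $v_{q-1}$ because $S_{v_{q-1}} \cap S_{t_q}^* = \emptyset$, then $x, y \in S_{v_{q-1}} \subseteq \sigma(t_q)$, which also contradicts $x, y \in S_{t_q}^*$. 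Applied to an optimal tree $D^*_W$ for the worst-case criterion, this immediately yields $totcost^*(I) \leq totcost(I,\mathbf{t}) = \sum_{j=1}^q c(t_j) \leq OPT_W(I)$, because $\sum_{j=1}^q c(t_j)$ is at most the cost of any root-to-leaf path in $D^*_W$ that extends $P^*$.

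For the expected-cost bound I apply the same construction to an optimal tree $D^*_E$ and prove the pointwise inequality $sepcost(I, \mathbf{t}, s) \leq cost(D^*_E, s)$ for every $s \in S$. The central observation is that the index $i(s)$, at which $s$ first enters some $\sigma(t_{i(s)})$, is exactly the index at which the path of $s$ in $D^*_E$ first diverges from $P^*$ (with the convention $i(s) = q$ when $s$ never diverges in the first $q-1$ tests). In either case, the first $i(s)$ tests on the path of $s$ in $D^*_E$ coincide with $t_1, \ldots, t_{i(s)}$, so $cost(D^*_E, s) \geq \sum_{j=1}^{i(s)} c(t_j) = sepcost(I, \mathbf{t}, s)$. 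Taking the expectation over $s \sim \mathbf{p}$ then gives $sepcost^*(I) \leq sepcost(I,\mathbf{t}) \leq cost_E(D^*_E) = OPT_E(I)$. The main subtlety throughout is to handle carefully the case where $P^*$ terminates at an internal node because the branch corresponding to $S_t^*$ is missing; once that edge case is set up, both the coverage claim and the pointwise cost comparison go through uniformly, and the theorem follows.
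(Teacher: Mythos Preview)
Your proof is correct and follows essentially the same approach as the paper: extract the root-to-leaf path in an optimal tree obtained by always descending into the $S^*_t$ branch, show the resulting sequence covers all pairs, and establish the pointwise inequality $sepcost(I,\mathbf{t},s)\le cost(D,s)$. The only notable difference is that you explicitly treat the edge case where the $S^*_t$ branch is absent at some internal node (so $P^*$ terminates early), whereas the paper tacitly assumes the path reaches a leaf; your extra care here is justified and does not change the underlying argument.
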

\begin{proof}
Let $D$ be a decision tree for the instance $I$.
Let $ t_1, t_2, \dots, t_q, l$ be the nodes in the root-to-leaf path in $D$ such that for each $i = 2, \dots, q,$ the node $t_i$ is 
on the branch stemming from $t_{i-1}$ which is associated with $S^*_{t_{i-1}}$, and the leaf node $l$ is the child of
$t_q$ associated with the objects in  $S^*_{t_q}.$

Let ${\bf t} = t_1, t_2, \dots, t_q$.
Abusing notation let us now denote with $t_i$ the test 
associated with the node $t_i$ so that   ${\bf t} $ is a sequence
of tests.
In particular, ${\bf t}$ is the sequence of tests performed according to the strategy defined by $D$ when the 
object $s^*$ whose class we want to identify, is such that  $s^* \in S^*_t$ holds for 
each test $t$ performed in the sequence. 

Notice that, by construction,  ${\bf t}$ is a sequence of tests covering all pairs of $S$.

\medskip

\noindent
{\em Claim.} For each object $s$ it holds that $sepcost(I, {\bf t}, s) \leq cost(D, s).$ 

If for each $i=1, \dots, q,$ we have that $s \not \in \sigma(t_i)$ then it holds that $cost(D, s) = \sum_{j=1}^q c(t_j) = sepcost(I,{\bf t}, s).$
Conversely, let $t_i$ be the first test in ${\bf t}$ for which $s \in \sigma(t_i).$ 
Therefore, we have that $t_1, t_2, \dots, t_i$ is a prefix of the root to leaf path followed when $s$ is the object chosen.
It follows that $cost(D, s) \geq \sum_{j=1}^i c(t_j) = sepcost(I, {\bf t}, s).$
The claim is proved.

\medskip
In order to prove the first statement of the theorem, we let $D$ be a decision tree which achieves the minimum possible expected cost, i.e., $cost_E(D) = OPT_E(I).$
Then, we have 

\begin{equation}
sepcost^*(I) \leq sepcost(I, {\bf t}) = \sum_{s \in S} p(s) sepcost(I, {\bf t}, s)
\leq  \sum_{s \in S} p(s) cost(D, s) = OPT_E(I).
\end{equation}

In order to prove the second statement of the theorem, 
we let $D$ be a decision tree which achieves the minimum possible worst testing cost, i.e., $cost_W(D) = OPT_W(D).$
Let $s \in S$ be such that, for each $j = 1, \dots, q-1,$ it holds that $s \not \in \sigma(t_j).$ Then, by the above claim it follows that
\begin{equation} \label{eq:totcost-worstcase}
totcost(I, {\bf t}) = sepcost(I, {\bf t}, s) \leq cost(D, s) \leq cost_W(D).
\end{equation}
Using (\ref{eq:totcost-worstcase}), we have 

\begin{equation}
totcost^*(I) \leq totcost(I, {\bf t}) \leq cost_W(D) = OPT_W(I).
\end{equation} 

The proof is complete.
\end{proof}

The following subadditivity property will be useful.

\begin{proposition}[Subadditivity] \label{prop:subadditivity}
Let $S_1, S_2, \dots, S_q$ be a partition of the object set $S.$ We have 
$OPT_E(S) \geq \sum_{j=1}^q OPT_E(S_j)$ and $OPT_W(S) \geq \max_{j=1}^q \{ OPT_W(S_j) \}$, where 
$OPT_E(S_j)$ and $OPT_W(S_j)$ are, respectively, the minimum expected testing cost and the worst case testing cost when the set of objects is  $S_j.$
\end{proposition}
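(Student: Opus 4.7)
The plan is to fix an optimal decision tree $D^*$ for the full instance $I=(S,C,T,\mathbf{p},\mathbf{c})$ and exploit the fact, already used in Proposition \ref{prop:Subadditivity}, that $D^*$ is also a valid decision tree for every restricted instance $I_j=(S_j,C_j,T,\mathbf{p}_j,\mathbf{c}_j)$. The point is that once we read off the cost $cost(D^*,s)$ of each object $s$ from the single tree $D^*$, both the expected and worst-case cost of $D^*$ over $S$ split nicely along the partition $S_1,\dots,S_q$, and each piece upper bounds the corresponding optimum on $S_j$.

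For the expected cost, take $D^*$ with $cost_E(D^*) = OPT_E(I)$. Since the probabilities in each $\mathbf{p}_j$ are just the restriction (not the renormalization) of $\mathbf{p}$, I would simply write
\begin{equation}
OPT_E(I) \;=\; \sum_{s\in S} p(s)\,cost(D^*,s) \;=\; \sum_{j=1}^{q}\sum_{s\in S_j} p(s)\,cost(D^*,s),
\end{equation}
and then, viewing $D^*$ as a decision tree for each $I_j$ (by Proposition \ref{prop:Subadditivity}), bound each inner sum from below by $OPT_E(I_j)$. Summing over $j$ gives the first inequality.

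For the worst cost, I would use $D^*$ with $cost_W(D^*) = OPT_W(I)$ and the identity
\begin{equation}
OPT_W(I) \;=\; \max_{s\in S} cost(D^*,s) \;=\; \max_{1\le j\le q}\,\max_{s\in S_j} cost(D^*,s).
\end{equation}
For each fixed $j$, the inner maximum is the worst-case cost of $D^*$ regarded as a decision tree for $I_j$, hence at least $OPT_W(I_j)$. Taking the max over $j$ yields $OPT_W(I)\ge \max_j OPT_W(I_j)$.

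There is really no deep obstacle here; the argument is essentially bookkeeping. The only subtle point, and the thing I would be careful to flag, is that the restricted instances use the un-normalized probabilities $\mathbf{p}_j$, so the expected cost of $D^*$ on $S$ decomposes additively over the parts without any scaling factors. Given that convention, the two inequalities drop out immediately from the decomposition of the sum and the maximum along the partition.
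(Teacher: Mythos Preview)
Your argument is correct. Note that the paper itself does not give a proof of this proposition; it is simply stated as a useful property, presumably because it is a straightforward consequence of the observation underlying Proposition~\ref{prop:Subadditivity} (any decision tree for $S$ is also a decision tree for each $S_j$). Your write-up makes exactly this observation explicit and carries out the bookkeeping cleanly.

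One small remark: for the worst-case inequality you do not even need the decomposition of the maximum along the partition. Since each $S_j\subseteq S$, Proposition~\ref{prop:Subadditivity} already gives $OPT_W(S)\ge OPT_W(S_j)$ for every $j$, and taking the maximum over $j$ finishes it. Your route via $\max_{s\in S}=\max_j\max_{s\in S_j}$ is of course also fine. The expected-cost part, by contrast, genuinely needs the additive decomposition you wrote, together with the convention (which you correctly flagged) that the restricted $\mathbf{p}_j$ are not renormalized.
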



\noindent
{\bf The optimization of submodular functions of sets of tests.} \label{sec:SubmodularOtimization}
Let $I= (S, T, C, {\bf p}, {\bf c})$ be an instance of the DFEP. 
A set function $f:2^T \mapsto \mathbb{R}_+$ is submodular non-decreasing if for every 
$R \subseteq R' \subseteq T$ and every 
$t \in T \setminus R'$, it holds that  $f(R \cup \{t\}) -f(R) \geq f(R' \cup \{t\}) -f(R')$ (submodularity) and
$f(R) \leq f(R')$ (non-decreasing).

It is easy to verify that the functions $$f_1: R \subseteq T \mapsto P(S) - P(\bigcap_{t \in R} S^*_t)$$ 
 $$f_2 : R \subseteq T \mapsto p(S) - p(\bigcap_{t \in R} S^*_t)$$
are non-negative non-decreasing submodular set functions. 
In words, $f_1$ is the function mapping a set of tests $R$ into the number of pairs covered by the tests in $R$. The function $f_2$, instead, 
maps a set of tests $R$ into the probability of the set of objects covered by the tests in $R$.

Let $B$ be a positive integer. Consider the following  optimization problem defined over a  non-negative, non-decreasing, sub modular function $f$:
%
\begin{equation} 
{\cal P}(f,B,T,{\bf c}) :  \max_{R \subseteq T} \left \{ f(R) : \sum_{t \in R}   c(t) \leq B  \right \}.
\label{eq:problemP}
\end{equation}

In \cite{wolsey}, Wolsey  studied the solution to the problem ${\cal P}$ provided by Algorithm \ref{Greedy-wolsey} below, called  
the  adapted greedy heuristic. 

\begin{algorithm}[ht]
\small
{\bf Procedure} {\tt Adapted-Greedy}($S, T, f, {\bf c}, B$)~~~~
\begin{algorithmic}[1]

\STATE  {$spent \leftarrow 0, \; A \leftarrow \emptyset, \; k \leftarrow 0$} \label{line:one}
\REPEAT \label{line:two}
  \STATE{$k \leftarrow k+1$} \label{line:three}

  \STATE  {let $t_k$ be a test $t$ which maximizes  $\frac{f(A \cup \{t\})- f(A)}{c(t)}$ among all $t \in T$ s.t. $c(t) \leq B$} \label{line:four}

   \STATE {$T \leftarrow T \setminus \{t_k\}, \; spent \leftarrow spent + c(t_k), \; A \leftarrow A \cup \{t_k\}$} \label{line:five}

\UNTIL{$spent > B$ or $T=\emptyset$} \label{line:six}

\STATE{{\bf if} $f(\{t_k\}) \geq f(A \setminus \{t_k\})$ {\bf then} Return  $\{t_k\}$ \\ ~~~~~~~~{\bf else} Return $\{t_1\, t_2 \, \dots \, t_{k-1}\}$  \label{output}}
\end{algorithmic}
\caption{Wolsey greedy algorithm}
\label{Greedy-wolsey}
\end{algorithm}
\normalsize

The following theorem summarizes results from \cite{wolsey}[Theorems 2 and 3].
\begin{theorem} \cite{wolsey} \label{theo:wolsey}
Let $R^*$ be the solution of the problem ${\cal P}$ and
$\overline{R}$ be the set returned  by Algorithm \ref{Greedy-wolsey}. Moreover, 
let $e$ be the base of the natural logarithm and $\chi$ be the solution of $e^{\chi} = 2 - \chi.$ 
Then we have that $f(\overline{R}) \geq (1- e^{-\chi})f(R^*) \approx 0.35 f(R^*)$.
Moreover, if there exists $c$ such $c(t) = c$ for each $t \in T$ and $c$ divides $B,$ then 
we have 
$f(\overline{R}) \geq (1-1/e) f(R^*) \approx 0.63 f(R^*).$
\end{theorem}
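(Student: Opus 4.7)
The plan is to follow the classical analysis of greedy algorithms for budget-constrained monotone submodular maximization. The core is a recurrence bounding how the gap $f(R^*) - f(A_i)$ shrinks as the greedy set $A_i = \{t_1, \ldots, t_i\}$ grows. First I would establish the key submodular inequality: for any intermediate set $A_i$, monotonicity and submodularity yield
\[
f(R^*) \leq f(R^* \cup A_i) \leq f(A_i) + \sum_{t \in R^* \setminus A_i} \bigl[f(A_i \cup \{t\}) - f(A_i)\bigr].
\]
Since every $t \in R^*$ satisfies $c(t) \leq B$ and the greedy rule in line~\ref{line:four} picks $t_{i+1}$ maximizing the ratio $[f(A_i \cup \{t\}) - f(A_i)]/c(t)$, I can bound the right-hand side by $B \cdot [f(A_{i+1}) - f(A_i)]/c(t_{i+1})$. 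Rearranging gives the key recurrence
\[
f(R^*) - f(A_{i+1}) \leq \left(1 - \frac{c(t_{i+1})}{B}\right)\bigl(f(R^*) - f(A_i)\bigr).
\]

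Unrolling the recurrence through step $k$ and using $1 - x \leq e^{-x}$ gives
\[
f(R^*) - f(A_k) \leq \exp\!\left(-\tfrac{1}{B}\textstyle\sum_{i=1}^k c(t_i)\right) f(R^*).
\]
The loop terminates precisely when $\sum_{i=1}^k c(t_i) > B$, so the (possibly over-budget) set $A_k$ satisfies $f(A_k) \geq (1 - 1/e)\, f(R^*)$. This immediately handles the uniform-cost statement: when $c(t) \equiv c$ and $c \mid B$, the greedy stops after exactly $k = B/c + 1$ steps; the first $k-1$ of them already use the full budget, and the tighter bookkeeping $(1 - 1/(B/c))^{B/c} \leq 1/e$ together with the final comparison in line~\ref{output} shows that the returned set attains the $(1-1/e)$ factor.

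For the general nonuniform case the difficulty is that $A_k$ overshoots the budget, so the algorithm must return either the in-budget prefix $\{t_1, \ldots, t_{k-1}\}$ or the singleton $\{t_k\}$ (the comparison in line~\ref{output}). A crude argument gives roughly $(1-1/e)/2$: submodularity together with $f(\emptyset) = 0$ implies subadditivity $f(A_k) \leq f(A_{k-1}) + f(\{t_k\})$, hence $\max\{f(A_{k-1}), f(\{t_k\})\} \geq f(A_k)/2$. The main obstacle, and where Wolsey's refinement lives, is sharpening this to the constant $1 - e^{-\chi}$. The idea is to parametrize by $\alpha = c(t_k)/B$ and use the recurrence more carefully: applying it only over the first $k-1$ steps (whose costs sum to some $\beta \leq 1$), we obtain $f(A_{k-1}) \geq (1 - e^{-\beta}) f(R^*)$, while the greedy choice of $t_k$ forces $f(\{t_k\})/c(t_k) \geq [f(R^*) - f(A_{k-1})]/B$, so $f(\{t_k\}) \geq \alpha \, e^{-\beta} f(R^*)$. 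Taking the max of these two lower bounds and optimizing over the adversary's worst choice of $\alpha,\beta$ subject to $\alpha + \beta > 1$ reduces to a one-variable optimization whose stationary condition is exactly $e^\chi = 2 - \chi$, yielding the claimed factor $1 - e^{-\chi}$.

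I would present the submodular lemma and the unrolled exponential bound in full, but for the final one-variable optimization I would invoke Wolsey's calculation rather than reproduce its calculus, since the conceptual content of the theorem is entirely captured by the recurrence above and the dichotomy between a ``heavy'' last test and a ``heavy'' prefix.
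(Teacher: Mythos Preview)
The paper does not prove this theorem; it is quoted from Wolsey~\cite{wolsey} (introduced with ``The following theorem summarizes results from~\cite{wolsey}[Theorems 2 and 3]'') and no argument is given. So there is nothing in the paper to compare your sketch against --- you are reconstructing Wolsey's original proof, and your recurrence
\[
f(R^*) - f(A_{i+1}) \leq \Bigl(1 - \tfrac{c(t_{i+1})}{B}\Bigr)\bigl(f(R^*) - f(A_i)\bigr)
\]
together with the prefix/last-item dichotomy is indeed the standard route.

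One genuine slip in your final paragraph: the bound $f(\{t_k\}) \geq \alpha\, e^{-\beta} f(R^*)$ does not follow as written. From the greedy choice and submodularity you get $f(\{t_k\}) \geq \alpha\bigl(f(R^*) - f(A_{k-1})\bigr)$, but the recurrence gives $f(R^*) - f(A_{k-1}) \leq e^{-\beta} f(R^*)$, which is the wrong direction to chain. The correct reading is a minimax: the adversary, wishing to minimize $\max\{f(A_{k-1}),\, f(\{t_k\})\}$, saturates the lower bound $f(A_{k-1}) = (1-e^{-\beta})f(R^*)$, which then forces $f(\{t_k\}) \geq \alpha\, e^{-\beta} f(R^*)$ in that worst case. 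Minimizing $\max\{1-e^{-\beta},\, (1-\beta)e^{-\beta}\}$ over $\beta$ (taking $\alpha \to 1-\beta$) gives the equation $e^{\beta} = 2-\beta$. With that phrasing fixed, your sketch is sound and matches Wolsey's argument.
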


\begin{corollary} \label{cor:wolsey}
Let ${\bf t} = t_1 \dots t_{k-1} t_{k}$ be the sequence of all the tests selected by {\tt Adapted-Greedy}, i.e., 
the concatenation of the two possible outputs in line 7. 
Then, we have that 
the total cost of the tests in ${\bf t}$ is at most $2B$ and $f(\{t_1, \dots, t_{k-1}, t_{k}\}) \geq 
(1-e^{-\chi}) f(R^*) \approx 0.35 f(R^*).$ 
\end{corollary}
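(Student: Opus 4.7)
The plan is to derive both claims of the corollary directly from the definition of the algorithm and from Theorem \ref{theo:wolsey}, with essentially no new work beyond careful bookkeeping.

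For the cost bound, I would look at the state of the variable $spent$ at the moment $t_k$ is selected. By the termination condition in line \ref{line:six}, the loop continued into the iteration where $t_k$ was chosen, which means that at the beginning of that iteration (right after the previous update of $spent$) we had $spent \leq B$. Hence the total cost of $t_1, \dots, t_{k-1}$ is at most $B$. Moreover, the constraint $c(t) \leq B$ enforced in line \ref{line:four} guarantees $c(t_k) \leq B$. Adding these two inequalities gives $\sum_{j=1}^{k} c(t_j) \leq 2B$, as required.

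For the approximation bound on $f$, the key observation is that $\overline{R}$ as returned by {\tt Adapted-Greedy} in line \ref{output} is, by construction, either $\{t_k\}$ or $\{t_1,\dots,t_{k-1}\}$, i.e., a subset of $\{t_1, \dots, t_k\}$. By Theorem \ref{theo:wolsey}, $f(\overline{R}) \geq (1-e^{-\chi}) f(R^*)$. Since $f$ is non-decreasing, taking the superset $\{t_1,\dots,t_k\} \supseteq \overline{R}$ only increases the value of $f$, so
\begin{equation*}
f(\{t_1, \dots, t_k\}) \geq f(\overline{R}) \geq (1 - e^{-\chi}) f(R^*).
\end{equation*}

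There is no genuine obstacle in this proof; the only subtlety is to make sure that the "previous $spent$" interpretation in the cost argument is correct (which amounts to verifying that the update of $spent$ in line \ref{line:five} happens after the selection in line \ref{line:four}, so that the budget check in line \ref{line:six} is against the value of $spent$ before the next iteration begins). Once this is clear, the two inequalities $spent \leq B$ at entry and $c(t_k) \leq B$ combine to give the $2B$ bound, and monotonicity of $f$ transfers Wolsey's guarantee from $\overline{R}$ to the full selected sequence.
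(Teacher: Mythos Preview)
Your proof is correct and is precisely the argument the paper leaves implicit (the corollary is stated without proof, as an immediate consequence of Theorem \ref{theo:wolsey} and the structure of Algorithm \ref{Greedy-wolsey}). Both parts are handled exactly as intended: the $2B$ bound from $spent \leq B$ at the start of the final iteration plus $c(t_k) \leq B$, and the coverage bound from monotonicity of $f$ applied to $\overline{R} \subseteq \{t_1,\dots,t_k\}$.
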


Our algorithm for building a decision tree  will employ this greedy heuristic for finding approximate solutions to  the optimization problem ${\cal P}$ over the submodular 
set  functions $f_1$ and $f_2$ defined in (\ref{eq:problemP}).

\subsection{Achieving logarithmic approximation}
We will show that Algorithm
\ref{algo:main}
 attains a logarithmic approximation for 
DFEP.
The algorithm consists of  4 blocks.
The first block (lines 1-2)
is the basis of the recursion,
which  returns a leaf if all objects belong to the same class $(P(S)=0)$.
If $P(S)=1$, we have that $|S|=2$ and the algorithm returns a tree that consists
of a  root and two leaves,  one for  each object, where the root is associated with  the cheapest test that separates these two objects. Clearly, this tree is optimal for
both the expected testing cost and the worst testing cost.

The second block (line 3)
calls procedure {\tt FindBudget} to define
the budget $B$ allowed for the tests selected in the third and fourth blocks. 
{\tt FindBudget} finds the smallest $B$ such that 
{\tt Adapted-Greedy}($S,T, f_1, {\bf c}, B$)
returns  a set of tests $R$ covering  at least $\alpha P(S)$ pairs.

%
\begin{algorithm}[ht]
\small
{\bf Procedure} {\tt DecTree}($S,T,C,{\bf p}, {\bf c}$)
\begin{algorithmic}[1]
\STATE { {\bf If} $P(S)=0$ {\bf then return} a single leaf $l$ associated with $S$} \label{b1start}
\STATE { {\bf If} $P(S)=1$ {\bf then return} a tree whose root is the  cheapest test that separates the two objects in $S$} \label{b1end}
\STATE{  $B$={\tt FindBudget$(S,T, C, {\bf c})$}, \; $spent \leftarrow 0, \; spent_2 \leftarrow 0, \; U \leftarrow S, \; k\leftarrow 1$  } \label{line:budget} \label{b2start}
\WHILE{ there is a test in $T$ of cost  $\leq  B - spent$} \label{b3-start}
  \STATE  {let $t_k$ be a test which maximizes $\frac{p(U) - p(U \cap S^*_t )}{c(t)}$ among all tests $t $  s.t. \ $t \in T$ and  $c(t) \leq  B - spent$} \label{line:main-opt1-1}
  \STATE {{\bf If} $k=1$ {\bf then} make $t_1$ root of $D$ {\bf else} $t_k$ child of $t_{k-1}$}
  \FOR { every $i \in \{1,\ldots,\ell\}$ such that $(S^i_{t_k} \cap U) \neq \emptyset$ {\bf and} $ S^i_{t_k} \neq  S^*_{t_k}$} \label{line:main-Ui1}
 \STATE{ Make $D^i \leftarrow $ {\tt DecTree}($S^i_{t_k} \cap U,T,C,{\bf p}, {\bf c}$) 
   child of $t_k$} \label{line:main-Call-1}
 \ENDFOR
\STATE {$U\leftarrow U \cap S^*_{t_k}, \;  spent \leftarrow spent + c(t_k)$ \;, $T \leftarrow  T \setminus \{t_k\}$, \; $k\leftarrow k+1$} 
 \ENDWHILE
\STATE{\bf end while} \label{b3-end}
\REPEAT \label{b4-start}
  \STATE  {let $t_k$ be a test which maximizes $\frac{P(U)- P(U \cap S^*_t)}{c(t)}$   among all tests $t \in T$ s.t. $c(t)\leq B$} \label{line:main-greedy2}
  \STATE {Set $t_k$ as a child of $t_{k-1}$}
  \FOR { every $i \in \{1,\ldots,\ell\}$ such that $(S^i_{t_k} \cap U) \neq \emptyset$ {\bf and} $ S^i_{t_k} \neq  S^*_{t_k}$} \label{line:main-Ui2}
   \STATE {Make $D^i \leftarrow ${\tt DecTree}($U \cap S^i_{t_k},T,C,{\bf p}, {\bf c}$) child of $t_k$}\label{line:main-Call-2}
 \ENDFOR
\STATE {$U\leftarrow U \cap S^*_{t_k}, \;  spent_2 \leftarrow spent_2 + c(t_k)$ \;, $T \leftarrow  T \setminus \{t_k\}$, \; $k\leftarrow k+1$}	
\UNTIL{$B - spent_2 < 0$ {\bf or} $T = \emptyset$} \label{b4-end}
\STATE{$D' \leftarrow $ {\tt DecTree}($U,T, C, {\bf p}, {\bf c}$); Make $D'$ a child of $t_{k-1}$  } \label{line:main-Call-3}
\STATE{Return the decision tree $D$ constructed by the algorithm}
\end{algorithmic}
\bigskip
{\bf Procedure} {\tt FindBudget}($S,T,C,{\bf c}$)
\begin{algorithmic}[1]
\STATE { Let $f : R \subseteq T \mapsto P(S) - P(\bigcap_{t \in R} S^*_t)$  and let $\alpha=1-e^{\chi}\approx 0.35$}
\STATE { Do a binary search in the interval [1,$\sum_{t \in T} c(t)$] to find the smallest $B$ such that 
{\tt Adapted-Greedy}($S,T, f, {\bf c}, B$)
returns  a set of tests $R$ covering  at least $\alpha P(S)$ pairs}
\STATE { Return $B$}
\end{algorithmic}
\caption{}  \label{algo:main}
\end{algorithm}
\normalsize

The third (lines 4-10)
and the fourth (lines 11-17)
blocks
are responsible for the construction of  the backbone of the decision tree (see Fig.\ 2) 
as well as to call {\tt DecTree} recursively to construct  the decision trees that are children of the nodes in the backbone.

The third block (the {\bf while} loop in lines 4-10) constructs the first part of the backbone 
(sequence ${\bf t}^A$ in Fig.\ 2) 
by iteratively selecting
the test that covers the maximum uncovered mass probability per unit of testing cost (line 5).
The selected test $t_k$ induces a partition $(U_{t_k}^1,\ldots,U_{t_k}^{\ell})$ on the set of objects $U$,
which contains the objects that have not been covered yet.
In lines 7 and 8, the procedure is recursively called for each set of this partition but for the one that is contained in the 
subset $S^*_{t_k}$. 
With reference to Figure 2, these calls will build the subtrees rooted at nodes not in ${\bf t}^A$ which are children of some node in ${\bf t}^A$.


Similarly, the fourth block (the {\bf repeat-until} loop) constructs the second part of the backbone 
(sequence ${\bf t}^B$ in Fig.\ 2) 
 by  iteratively selecting
the test that covers the maximum number of uncovered pairs per unit of testing cost (line 12). 
The line \ref{line:main-Call-3} is
responsible
for building a decision 
tree for the objects that are not covered
by the tests in the backbone.

We shall note that both the  third and the fourth block of the algorithm  are 
based on the adapted greedy heuristic of Algorithm \ref{Greedy-wolsey}. 
In fact,  $p(U)-p(U\cap S_{t_k}^*) $ in line \ref{line:main-opt1-1} (third block) 
 corresponds to $f_2(A \cup t_k)-f_2(A) $ in Algorithm \ref{Greedy-wolsey} because,
right before the selection of the $k$-th test,
$A$ is the set of  tests  $\{t_1 ,\ldots, t_{k-1}\}$ and $U=\cap_{i=1}^{k-1} S^*_{t_i}$.
Thus,  $$f_2(A \cup t_k )=p(S)-p(\cap_{i=1}^k S^*_{t_i})=p(S)- p(U \cap S^*_{t_k}) $$ 
 and  $$f_2(A )=p(S)-p(\cap_{i=1}^{k-1} S^*_{t_i})=p(S)- p(U)$$
so that
 $$f_2(A \cup t_k ) - f_2(A ) = p(U)-p(U\cap S_{t_k}^*).$$

A similar argument shows that $P(U)-P(U\cap S_t^*) $ in line \ref{line:main-greedy2} (fourth block)
corresponds to $f_1(A \cup t_k)-f_1(A) $ in Algorithm \ref{Greedy-wolsey}.
These connections will allow us to  apply both Theorem \ref{theo:wolsey} and Corollary \ref{cor:wolsey} to analyze the cost and the coverage of these sequences.

\begin{figure}[h]
\centering
\includegraphics[width=80mm]{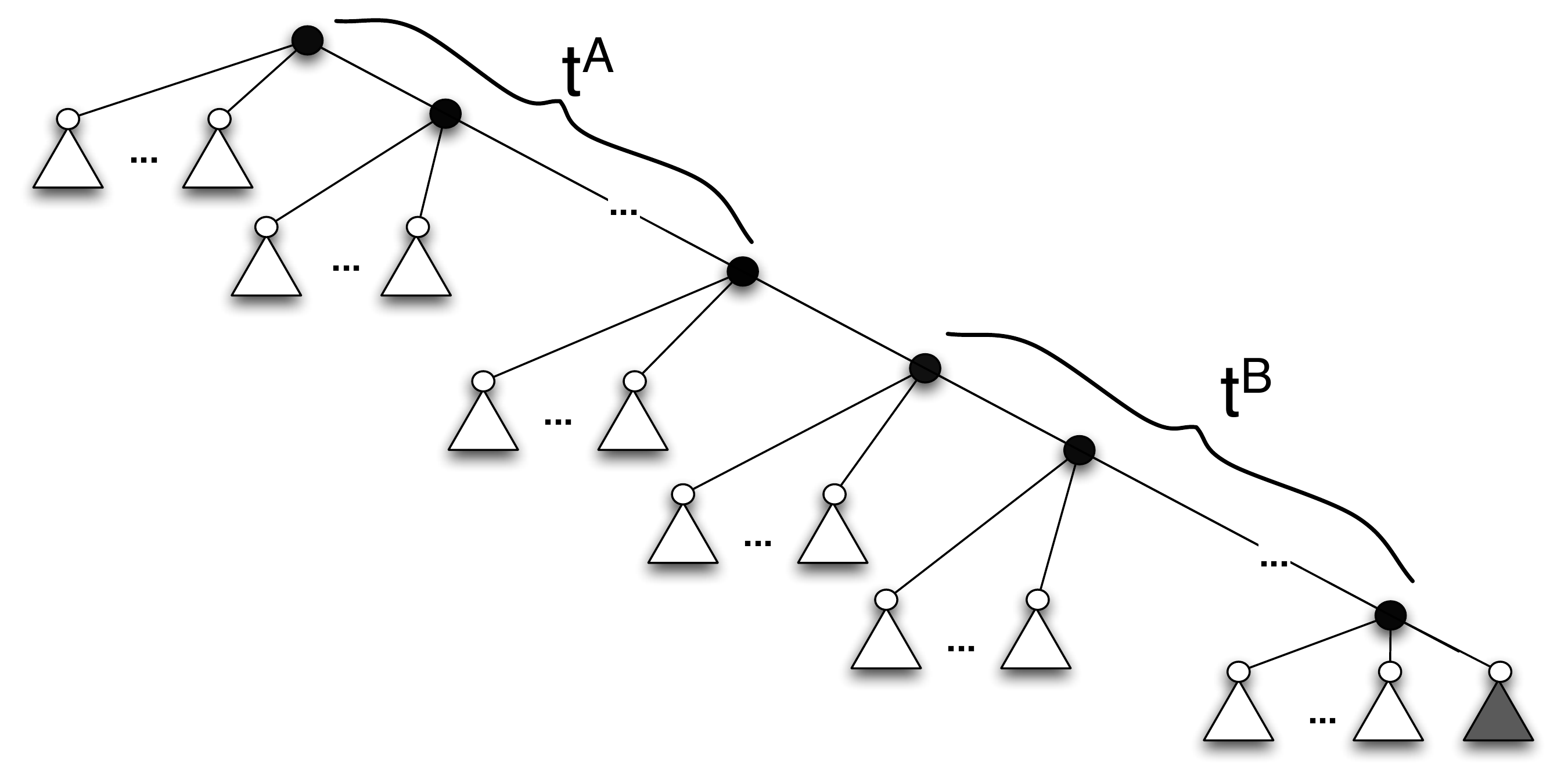}
\label{fig:decisiontree}
  \caption{The structure of the decision tree built by {\tt DecTree}: white nodes correspond to recursive calls. In each white subtree, the number of pairs is at most $ P(S)/2,$ while in the lowest-right gray subtree it is at most $8/9 P(S)$ (see the proof of Theorem \ref{theo:main}).}
\vskip -0.1in
\end{figure}

Let ${\bf t}_I$ denote the sequence of tests
obtained by concatenating the tests
selected in the {\bf while} loop  and in the {\bf repeat-until} loop  of the execution of {\tt DecTree} over instance $I.$
We delay to the next section the proof of the following key result.

\begin{theorem} \label{theo:key}
Let $\chi$ be the solution of $e^{\chi} = 2-\chi,$ and $\alpha =1-e^{\chi} \approx 0.35.$ 
There exists a constant $\delta \geq 1,$ such that for any instance $I = (S, C, T, {\bf p}, {\bf c})$ of the DFEP,
the sequence ${\bf t}_I$ covers at least $\alpha^2 P(S) > \frac{1}{9} P(S)$ pairs,  and it holds that 
$sepcost(I, {\bf t}_I) \leq \delta \cdot sepcost^*(I)$ and $totcost(I, {\bf t}_I) \leq 3 totcost^*(I).$
\end{theorem}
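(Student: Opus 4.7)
The proof addresses the three assertions in increasing order of difficulty.

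\textbf{Part 1 (total cost).} First I show $B\leq totcost^*(I)$. Let $R^\star$ be any set of tests of cost $totcost^*(I)$ covering all pairs of $S$; then $f_1(R^\star)=P(S)$ equals the optimum of the knapsack problem ${\cal P}(f_1,totcost^*(I),T,{\bf c})$, so by Theorem \ref{theo:wolsey}, {\tt Adapted-Greedy}$(S,T,f_1,{\bf c},totcost^*(I))$ returns a set covering at least $\alpha P(S)$ pairs. By the minimality of $B$ in {\tt FindBudget}, $B\leq totcost^*(I)$. The while loop of block 3 spends at most $B$ by construction, and by Corollary \ref{cor:wolsey} block 4 spends at most $2B$. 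Hence $totcost(I,{\bf t}_I)\leq 3B\leq 3\cdot totcost^*(I)$.

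\textbf{Part 2 (pair coverage).} Let $R_A$ denote the tests picked in block 3, $U_0=S\cap\bigcap_{t\in R_A}S^*_t$ the residual at the start of block 4, and $x=P(S)-P(U_0)$ the number of pairs already covered by block 3. If $x\geq \alpha^2 P(S)$ we are done, so assume $x<\alpha^2 P(S)$. The key observation is that no test in $R_A$ covers a pair contained in $U_0$: both endpoints of such a pair lie in $S^*_t$ for every $t\in R_A$, hence neither is in $\sigma(t)$. By the definition of $B$, there exists $R^{\rm opt}\subseteq T$ with $c(R^{\rm opt})\leq B$ and $f_1(R^{\rm opt})\geq \alpha P(S)$; consequently $R^{\rm opt}\setminus R_A$ has cost $\leq B$, lives in $T\setminus R_A$, and still covers at least $\alpha P(S)-x$ pairs of $U_0$. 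Block 4 is exactly {\tt Adapted-Greedy} on the restriction $(U_0,T\setminus R_A,f_1',{\bf c},B)$, so Corollary \ref{cor:wolsey} gives coverage of at least $\alpha(\alpha P(S)-x)$ pairs of $U_0$, and
\[ x+\alpha(\alpha P(S)-x) \;=\; \alpha^2P(S)+(1-\alpha)x \;\geq\; \alpha^2P(S). \]
Since $\alpha>1/3$, this yields $\alpha^2 P(S)>P(S)/9$.

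\textbf{Part 3 (separation cost).} This will be the main obstacle. I use the identity
\[ sepcost(I,{\bf t}_I) \;=\; \sum_{j=1}^{q_A+q_B} c(t_j)\,p(U_{j-1}), \qquad U_{j-1}=\bigcap_{i<j}S^*_{t_i}, \]
and split into the block 3 and block 4 parts. The block 3 part is the weighted min-sum objective paid by the $f_2$-greedy. The plan is to apply the classical knapsack-greedy analysis for submodular min-sum cover: for each probability threshold $\tau\in[0,p(S)]$ compare the greedy's time-to-reach $\tau$ with the optimum's via Theorem \ref{theo:wolsey} applied at the appropriate intermediate budget, then integrate over $\tau$. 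This yields a constant-factor bound $\delta_1\, sepcost^*(I)$ for the full greedy; truncating at budget $B$ only shrinks the sum.

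For the block 4 part, the bound $p(U_{j-1})\leq p(U_0)$ together with the $2B$ bound on block 4's total cost gives an upper bound of $2B\,p(U_0)$, so the delicate step is to show $B\cdot p(U_0)=O(sepcost^*(I))$. The approach is to use the $f_2$-greedy guarantee $p(U_0)\leq p(S)-\alpha\,\mathrm{OPT}(f_2,B)$ together with the minimality of $B$ from {\tt FindBudget}: when $p(U_0)$ is not already small, $\mathrm{OPT}(f_2,B)$ must be small, meaning no cost-$\leq B$ set of tests can separate a $\Theta(p(U_0))$ fraction of the mass, hence any sequence doing so must spend cost $\Omega(B)$ on such mass. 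Combining this with the weighted lower bound on $sepcost^*(I)$ from Theorem \ref{theo:lowerbound} delivers $B\cdot p(U_0)\leq \delta_2\, sepcost^*(I)$. Summing, $sepcost(I,{\bf t}_I)\leq (\delta_1+2\delta_2)\cdot sepcost^*(I)$, establishing the bound with $\delta=\delta_1+2\delta_2$.
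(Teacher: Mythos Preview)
Your Parts 1 and 2 are correct and essentially identical to the paper's Lemmas \ref{lemma:length} and \ref{lemma:key3}. The decomposition you set up in Part 3 is also the same as the paper's: bound the block-4 contribution by $2B\cdot p(U_0)$ and reduce everything to controlling $sepcost_B(I,{\bf t}^A)=\text{(block-3 sepcost)}+B\cdot p(U_0)$ against $sepcost^*(I)$.

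The gap is in how you justify that last bound. Two concrete problems:

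\textbf{(a) The while loop is not a truncation of the unconstrained ratio-greedy.} Line~5 of Algorithm~\ref{algo:main} restricts to tests with $c(t)\leq B-\mathit{spent}$, so as the budget shrinks the eligible set changes and the sequence can diverge from the full greedy (which would happily pick a high-ratio test of cost close to $B$ late in the process). Hence ``truncating at budget $B$ only shrinks the sum'' is not a valid step, and neither Theorem~\ref{theo:wolsey} nor the threshold-integration argument you sketch applies prefix-by-prefix to this particular sequence. The standard min-sum-cover analysis you invoke needs the greedy order to be budget-independent; with non-uniform costs and the $c(t)\leq B-\mathit{spent}$ constraint it is not.

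\textbf{(b) The argument for $B\cdot p(U_0)=O(sepcost^*(I))$ does not close.} You appeal to the minimality of $B$ from {\tt FindBudget}, but that minimality is with respect to $f_1$ (pair coverage), not $f_2$ (probability coverage), so it says nothing about $OPT(f_2,B)$. Even granting an $f_2$-type inequality $p(U_0)\leq p(S)-\alpha\cdot OPT(f_2,B)$, the step ``no cost-$\leq B$ set covers $\Theta(p(U_0))$ mass $\Rightarrow$ any covering sequence spends $\Omega(B)$ on that mass $\Rightarrow$ $sepcost^*(I)\geq \Omega(B\cdot p(U_0))$'' needs a genuine lower-bound argument on $sepcost^*$ that you have not supplied.

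The paper handles exactly this difficulty in Lemma~\ref{lemma:key2} by a nontrivial detour: it builds an auxiliary instance $I'$ in which every test is split into $2c(t)$ unit-cost copies, shows that on $I'$ the greedy of block~3 \emph{is} an honest {\tt Adapted-Greedy} run (so Wolsey's guarantee applies at every prefix), proves a constant-factor bound there via a geometric doubling comparison against the optimal separation sequence, and then transports the bound back to $I$. That unit-cost reduction is precisely what makes the ``threshold integration'' idea rigorous with non-uniform costs and the shrinking-budget constraint; without it (or an equivalent device) Part~3 remains a plan rather than a proof.
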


Applying Theorem \ref{theo:key} to each recursive call of {\tt DecTree} we can prove the following theorem about the approximation guaranteed 
by our algorithm both in terms of worst testing cost and expected testing cost.

\begin{theorem} \label{theo:main}
For any instance $I=(S, C, T, {\bf p}, {\bf c})$ of the DFEP,  the algorithm {\tt DecTree} outputs a decision tree with expected testing cost at most $O(log(n)) \cdot OPT_E(I)$ and with worst testing cost   at most $O(log(n)) \cdot OPT_W(I)$.
\end{theorem}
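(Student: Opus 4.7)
The plan is to prove Theorem~\ref{theo:main} by induction on the pair count $P(S)$, paying one ``backbone unit'' of cost per recursion level via Theorem~\ref{theo:key} and exploiting a factor-$\tfrac{9}{8}$ contraction of $P$ across every recursive call to ensure the recursion has depth $O(\log P(S)) = O(\log n)$.

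\textbf{Step 1 (one-level cost recurrence).} Tracing {\tt DecTree}, every object $s$ follows the backbone ${\bf t}_I$ until the first iteration $i(s)$ at which $s \in \sigma(t_{i(s)})$, and then descends into a unique recursive sub-instance $I'$ in the set $\mathcal{I}$ of recursive subcalls (either a white subtree from line~\ref{line:main-Call-1} or~\ref{line:main-Call-2}, or the gray subtree from line~\ref{line:main-Call-3}). Since the backbone prefix contributes exactly $sepcost(I,{\bf t}_I,s)$ to $cost(D,s)$, this yields
\[
cost_E(D) \le sepcost(I,{\bf t}_I) + \sum_{I' \in \mathcal{I}} cost_E(D_{I'}), \qquad cost_W(D) \le totcost(I,{\bf t}_I) + \max_{I' \in \mathcal{I}} cost_W(D_{I'}).
\]
By Theorems~\ref{theo:key} and~\ref{theo:lowerbound}, the first summand is at most $\delta \cdot OPT_E(I)$ in the first inequality and at most $3 \cdot OPT_W(I)$ in the second.

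\textbf{Step 2 (contraction of pair counts).} The gray sub-instance has pair count strictly less than $(8/9)P(S)$ because Theorem~\ref{theo:key} guarantees that ${\bf t}_I$ covers more than $P(S)/9$ pairs. For a white sub-instance on $U \cap S^i_{t_k}$ with $i \ne *$, disjointness of $S^i_{t_k}$ and $S^*_{t_k}$ gives $P(S^i_{t_k}) + P(S^*_{t_k}) \le P(S)$, while maximality of $S^*_{t_k}$ gives $P(S^i_{t_k}) \le P(S^*_{t_k})$; combining these yields $P(U \cap S^i_{t_k}) \le P(S^i_{t_k}) \le P(S)/2$. Thus every child sub-instance has pair count at most $(8/9)P(S)$.

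\textbf{Step 3 (induction).} The proof would then proceed by induction on $P(S)$, establishing $cost_E(D) \le K_E(\log P(S) + 1)\cdot OPT_E(I)$ and $cost_W(D) \le K_W(\log P(S) + 1)\cdot OPT_W(I)$ for constants $K_E, K_W$ to be fixed. The base cases $P(S) \in \{0,1\}$ are handled optimally at lines~\ref{b1start}--\ref{b1end}. In the inductive step, applying the hypothesis to each $I'$, invoking the contraction of Step~2 and Proposition~\ref{prop:subadditivity} (which gives $\sum_{I'} OPT_E(I') \le OPT_E(I)$ and $\max_{I'} OPT_W(I') \le OPT_W(I)$), the recurrence of Step~1 gives
\[
cost_E(D) \le \bigl(K_E(\log P(S) + 1) + \delta - K_E \log(9/8)\bigr) OPT_E(I),
\]
and the analogous inequality for $cost_W(D)$ with $\delta$ replaced by $3$ and $K_E$ by $K_W$. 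Choosing $K_E \ge \delta/\log(9/8)$ and $K_W \ge 3/\log(9/8)$ closes the induction, and since $P(S) \le n^2$ one obtains the desired $O(\log n)$ approximations.

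\textbf{Main obstacle.} The real content is already packaged in Theorem~\ref{theo:key}, which is proved in the next section; once it is in hand, the argument above is largely bookkeeping. The one combinatorial step that is not completely routine is the contraction bound for white sub-instances in Step~2, which crucially uses that $S^*_t$ is the branch carrying the most pairs: without that specific choice, a single $\ell$-way test could push a child of pair count close to $P(S)$ into the recursion and destroy the logarithmic depth guarantee.
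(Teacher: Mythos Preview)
Your proposal is correct and follows essentially the same approach as the paper's own proof: the same one-level recurrence via $sepcost$ and $totcost$, the same application of Theorems~\ref{theo:key} and~\ref{theo:lowerbound} together with subadditivity (Proposition~\ref{prop:subadditivity}), the same $P(S)/2$ bound for white sub-instances and $8P(S)/9$ bound for the gray one, and the same induction on $P(S)$ with constants chosen so that one backbone charge of $\delta$ (resp.\ $3$) is absorbed by a $\log(9/8)$ decrement per level. The only cosmetic difference is that the paper uses the hypothesis $1+\beta\log P(S)$ with $\beta\log(9/8)=\delta$, whereas you phrase it as $K_E(\log P(S)+1)$ with $K_E\ge\delta/\log(9/8)$; these are equivalent.
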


\begin{proof}
For any instance $I,$ let $D^{\mathbb{A}}(I)$ be the decision tree produced by the algorithm {\tt DecTree}.  
First, we prove an approximation for the expected testing cost.
Let $\beta$ be such that $ \beta \log \frac{9}{8}= \delta $, where $\delta$ is the constant given in the statement of Theorem \ref{theo:key}.
Let us assume by induction that the algorithm guarantees approximation 
$1+\beta \log P(G) $, for the expected testing cost, for  every instance $I'$ on a set of objects $G$ with $ 1 \leq P(G) < P(S).$ 

Let ${\cal I}$ be the set of instances
on which the algorithm ${\tt DecTree}$ is recursively called in lines
8,15 and 18.
We have that 
%
%
\begin{eqnarray}
\!\!\!\!\!\! \frac{cost_E(D^{\mathbb{A}}(I))}{OPT_E(I)} \!\!\!\!\! &=& \!\!\!\! \frac{sepcost(I, {\bf t}_I) + \sum_{I' \in {\cal I}} cost_E(D^{\mathbb{A}}(I'))}{OPT_E(I)} \label{1}\\
&\leq& \!\!\!\! \frac{sepcost(I, {\bf t}_I)}{OPT_E(I)} + \max_{I' \in {\cal I}}\frac{cost_E(D^{\mathbb{A}}(I'))}{OPT_E(I')} \label{3} \\
&\leq& \!\!\!\! \delta + \max_{I' \in {\cal I}}\frac{cost_E(D^{\mathbb{A}}(I'))}{OPT_E(I')} \label{3-1}\\
&\leq& \!\!\!\! \delta + \max_{I' \in {\cal I}} \left \{1+\beta \log P(I') \right \} \label{4}\\
&\leq& \!\!\!\! \delta +  1+ \beta \log 8P(S)/9 = 1+ \beta log(P(S)). \label{6} 
\end{eqnarray}

The first equality follows by the recursive way the algorithm {\tt DecTree} builds the decision tree.
Inequality (\ref{3}) follows from (\ref{1}) by the subadditivity property (Proposition  \ref{prop:subadditivity}) and
simple algebraic manipulations.  
The inequality in (\ref{3-1}) follows  by Theorem \ref{theo:key} together with Theorem \ref{theo:lowerbound} yielding 
$sepcost(I, {\bf t}_I) \leq \delta \, OPT_E(I).$
The  inequality (\ref{4})  follows by induction (we are 
using $P(I')$ to denote the number of pairs of instance $I'$).

To prove that the  inequality in (\ref{6}) holds
we have to argue that every instance $I'\in {\cal I}$ has at most $\frac{8}{9} P(S)$ pairs.
Let $U_{t_k}^i = S_{t_k}^i \cap U$ as in the lines 8 and 15. First we show that  the number of pairs  of  $U_{t_k}^i$ 
is at most $ P(S)/2$.
We have $S_{t_k}^i \neq S_{t_k}^*$  and  $S^*_{t_k}$ is the 
set with the maximum number of pairs in the partition $\{S^1_{t_k},\ldots,  S^{\ell}_{t_k}\}$,
induced by $t_k$ on the set $S$.  It follows
that $ P(U_{t_k}^i) \leq P(S_{t_k}^i) \leq P(S)/2.$
Now it remains to show that the instance $I'$, recursively called, in line 18 has at most
 $8/9 P(S)$ pairs. This is true because the number of pairs of $I'$ is equal to the number of pairs not covered by ${\bf t}_I$ which is bounded by $(1-\alpha^2) P(S) \leq 8P(S)/9$ by Theorem \ref{theo:key}.

Now, we prove an approximation for the  worst testing cost of the tree $D^{\mathbb{A}}(I)$.
Let $ \rho $ be such that $ \rho \log \frac{9}{8} = 3$. 
Let  us assume by induction  that the worst testing cost of 
$D^{\mathbb{A}}(I')$
is at most  $(1+ \rho \log P(G)  \cdot OPT_W(I')) $ for  every instance $I'$ on a set of objects $G$ with $1 \leq P(G) < P(S).$ 
We have that 
%
%
\begin{eqnarray}
\!\!\!\!\!\! \frac{cost_W(D^{\mathbb{A}}(I))}{OPT_W(I)} \!\!\!\!\! & \leq & 
\frac{ totcost(I,{\bf t}_I) + \max_{I' \in {\cal I}} \{ cost_W(D^{\mathbb{A}}(I')) \} } { OPT_W(I) } \\
&\leq& \!\!\!\! \frac{totcost(I,{\bf t}_I)}{OPT_W(I)} + \max_{I' \in {\cal I}}  
\frac{ cost_W(D^{\mathbb{A}}(I'))}{ OPT_W(I') } \label{8} \\
&\leq& \!\!\!\! \frac{totcost(I,{\bf t}_I)}{totcost^*(I)} + \max_{I' \in {\cal I}}  
\frac{ cost_W(D^{\mathbb{A}}(I'))}{ OPT_W(I') } \label{9} \\
&\leq& \!\!\!\! 3 + 1 + \rho \log (8 P(S)/9) = 1+ \rho \log (P(S)) \label{10} \\ \nonumber \end{eqnarray}

Inequality (\ref{8}) follows from   the subadditivity property (Proposition  \ref{prop:subadditivity}) for the worst testing cost.
The inequality  (\ref{9}) follows  by  Theorem \ref{theo:lowerbound}.
The   inequality  (\ref{10})  follows from Theorem \ref{theo:key}, the  induction hypothesis (we are 
using $P(I')$ to denote the number of pairs of instance $I'$) and from the fact mentioned above that every instance
in ${\cal I}$ has at most $8/9P(S)$ pairs.

Since $P(S) \leq n^2$ it follows that the algorithm provides
an $O(\log n)$ approximation for both the expected testing cost and the worst testing cost.

\end{proof}

The previous theorem shows  that algorithm {\tt DecTree}
provides  simultaneously logarithmic  approximation
for the minimization of expected testing cost and worst testing cost.
We would like to remark  that this is an  interesting feature of our algorithm. In this respect,  
let us consider the following instance of the DFEP\footnote{ 
This is also an instance of the identification problem mentioned in the introduction}:
Let $S=\{s_1,\ldots,s_n\}$; $p_i=2^{-i}$, for $i=1,\ldots,n-1$ and $p_n=2^{-(n-1)}$; 
the set of tests is in one to one correspondence with   the set of all binary strings
of length $n$ so that  the test corresponding to a binary string ${\bf b}$ outputs
$0(1)$ for object $s_i$ if and only if the $i$th bit of ${\bf b}$ is 0(1).
Moreover, all tests have unitary costs.
This instance is  also an instance of the problem of constructing
an optimal prefix coding binary  tree, which can be solved by the Huffman's algorithm \cite{cormen-algorithms}.
Let $D^*_E$ and  $D^*_W$ be, respectively, the decision trees with minimum expected cost and
minimum worst testing cost for this example.
Using Huffman's algorithm, it is not difficult to verify  that 
 $Cost_E(D^*_E) \leq 3$ and  $Cost_W(D^*_E)=n-1$. In addition, 
we have that $Cost_E(D^*_W)=Cost_W(D^*_W)=\log n$.
This  example shows that the minimization of the expected testing
cost may result in high worst testing cost and vice versa the 
minimization of the worst testing
cost may result in high expected  testing cost. 
Clearly, in real situations presenting such a dichotomy, 
the ability of our algorithm to optimize simultaneously both measures of cost
might provide a  significant gain over strategies only guaranteeing competitiveness with respect to one measure.

\subsection{The proof of Theorem \ref{theo:key}}

We now return to the proof of Theorem \ref{theo:key} for which will go through three lemmas. 

\begin{lemma} \label{lemma:length}
For any instance $I = (S, C, T, {\bf p}, {\bf c})$ of the DFEP,
the value $B$ returned by the procedure  {\tt FindBudget}$(S,T,C,{\bf c})$ satisfies 
 $B \leq totcost^*(I)$.
\remove{In addition, the execution  of procedure {\tt Adapted-Greedy} over
the instance $I$, with budget $B$, and $f : R \subseteq S \mapsto P(S) - P(\bigcap_{t \in R} S^*_t)$ selects a sequence of tests that  covers at least
$\alpha P(S)$ pairs, where $\alpha= 1-e^{\chi} \approx 0.35$ and $\chi$ is the 
solution of $e^{\chi} = 2-\chi.$}
\end{lemma}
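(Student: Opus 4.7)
The plan is to exhibit $B^* := totcost^*(I)$ as a value of the budget that already satisfies the stopping condition of {\tt FindBudget}; since the procedure returns the smallest such value, this immediately gives $B \leq B^*$.

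The first step is to argue that $f_1$ applied to an optimal separating sequence evaluates to $P(S)$. By the definition of $totcost^*(I)$, there is a set $R' \subseteq T$ with $\sum_{t \in R'} c(t) \leq B^*$ such that every pair of $S$ is covered by some $t \in R'$. Unpacking the definitions, a pair $\{x, y\}$ is covered by a test $t$ exactly when $\{x, y\} \not\subseteq S^*_t$, so it is covered by some $t \in R'$ precisely when $\{x, y\} \not\subseteq \bigcap_{t \in R'} S^*_t$. Since every pair of $S$ is covered by $R'$, no pair of $S$ is contained in $\bigcap_{t \in R'} S^*_t$, so $P(\bigcap_{t \in R'} S^*_t) = 0$ and hence $f_1(R') = P(S) - P(\bigcap_{t \in R'} S^*_t) = P(S)$.

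Because $f_1$ is bounded above by $P(S)$, this shows that the optimum value of the instance $\mathcal{P}(f_1, B^*, T, \mathbf{c})$ of the submodular maximization problem \eqref{eq:problemP} is exactly $P(S)$, attained by $R'$. Applying Theorem \ref{theo:wolsey} to the non-negative non-decreasing submodular function $f_1$ with budget $B^*$, the set $\overline{R}$ produced by {\tt Adapted-Greedy}$(S, T, f_1, \mathbf{c}, B^*)$ satisfies
\[
f_1(\overline{R}) \;\geq\; (1 - e^{-\chi})\, P(S) \;=\; \alpha\, P(S),
\]
which, by the equivalence established in the first step, means precisely that $\overline{R}$ covers at least $\alpha P(S)$ pairs of $S$.

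Therefore $B^*$ is a value of the budget for which the set returned by {\tt Adapted-Greedy} already satisfies the coverage condition used inside {\tt FindBudget}. Since {\tt FindBudget} returns the smallest such $B$ in $[1, \sum_{t \in T} c(t)]$ and $B^*$ lies in that interval, we conclude $B \leq B^* = totcost^*(I)$. The only subtle point worth checking carefully is the translation between the pair-covering language used by the algorithm and the submodular function $f_1$; once that equivalence is in place, the lemma reduces to a direct invocation of Wolsey's guarantee, so I do not anticipate any genuine obstacle.
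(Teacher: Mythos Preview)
Your argument is correct and is essentially the paper's own proof: both show that with budget $totcost^*(I)$ the optimum of $\mathcal{P}(f_1,\cdot,T,\mathbf{c})$ equals $P(S)$, invoke Theorem~\ref{theo:wolsey} to conclude that {\tt Adapted-Greedy} covers at least $\alpha P(S)$ pairs at that budget, and then use the minimality of $B$ returned by {\tt FindBudget}. The only difference is cosmetic---the paper phrases the last step as a contradiction, whereas you argue directly.
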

\begin{proof}
Let us consider the problem ${\cal P}$ in equation (\ref{eq:problemP})  with the function $f_1$ that
measures the number of pairs covered by a set of tests.
Let $G(x)$ be the number of pairs covered by the solution constructed with {\tt Adapted-Greedy}
when the budget---the righthand side of equation (\ref{eq:problemP})---is $x$. 
By construction, {\tt FindBudget} finds the smallest $B$ such that $G(B) \geq \alpha P(S)$.

Let $\tilde{\bf t}$ be a  sequence that covers  all  pairs in $S$ and that satisfies 
$totcost( \tilde{\bf t}) = totcost^*(I)$.
 Arguing by contradiction we can show that $totcost(I, \tilde{\bf t}) \geq B.$ Suppose that this was not the case, then
$\tilde{\bf t}$ would be the sequence which covers  $P(S)$ pairs using a sequence of tests of total cost not larger than some $B' < B.$
By Theorem 2,  the procedure {\tt Adapted-Greedy} provides an $\alpha$-approximation of the maximum number of pairs covered 
with a given budget.  Therefore, when run with budget $B',$ {\tt Adapted-Greedy} is guaranteed to produce a sequence of total cost $\leq B'$ which covers 
at least $\alpha P(S)$ pairs. 
However,  by the minimality of $B$ it follows that such a sequence does not exist. 
Since this contradiction follows by the hypothesis  $totcost(I, \tilde{\bf t}) < B,$ it must hold that $totcost^*(I)  \geq totcost(I, \tilde{\bf t}) \geq B,$ as desired.
\end{proof}

Given an instance $I,$ for a sequence of tests ${\bf t}=t_1,\ldots,t_k$ and a real $K>0$,
let $sepcost_K(I, {\bf t})$ be
the separation cost of ${\bf t}$ 
when every non-covered object is charged  $K$, that is,
$$sepcost_K(I, {\bf t}) =\sum_{\substack{x \in S \\ x \mbox{ is covered by } {\bf t}} } p(x)sepcost(I,{\bf t}, x)+
 \sum_{\substack{x \in S \\ x \mbox{ is not covered by } {\bf t}}} p(x) \cdot K $$

The proofs of the following technical lemma is deferred to the appendix.

\begin{lemma} \label{lemma:key2} 
Let  
${\bf t}^A$ be the sequence obtained by concatenating the tests selected in the {\bf while} loop of Algorithm 
\ref{algo:main}.
Then,  $totcost(I, {\bf t}^A) \leq B$ and
 $sepcost_B(I, {\bf t}^A) \leq \gamma \cdot  sepcost^*(I),$
where  $\gamma$ is a positive constant and $B$ is the budget calculated at line 3.
\end{lemma}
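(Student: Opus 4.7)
The first bound, $totcost(I,{\bf t}^A) \leq B$, is immediate from the guard of the while loop: each iteration admits a test $t_k$ only when $c(t_k) \leq B - spent$, so the running total $spent$ never exceeds $B$. For the second bound my plan is to carry out a density-based greedy analysis, working directly with the integral form of $sepcost_B$. First I will rewrite
\begin{equation*}
sepcost_B(I,{\bf t}^A) \;=\; \sum_{j=1}^{k_0} c(t_j)\,p(U_{j-1}) \;+\; (B-spent)\,p(U_{k_0}),
\end{equation*}
where $U_0 = S$, $U_j = U_{j-1}\cap S^*_{t_j}$, and $k_0$ is the final iteration index of the while loop. Equivalently, $sepcost_B$ is the integral on $[0,B]$ of the greedy's uncovered-mass step function, which matches the natural submodular-cover viewpoint encoded in $f_2$.

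The central step will be a density comparison between the greedy and any sequence ${\bf t}^*$ achieving $sepcost^*(I)$. At step $k$, the greedy selects the test of largest $(p(U_{k-1}) - p(U_{k-1}\cap S^*_t))/c(t)$ among the affordable tests. Via a weighted-averaging argument over the tests of ${\bf t}^*$---using that $\sum_j c(t_j^*)\,\delta_j^*(U_{k-1}) \leq sepcost^*(I)$ with $\delta_j^*(U_{k-1})$ denoting the mass of $U_{k-1}$ covered by $t_j^*$ at step $j$---I will extract a witness test of moderate cost whose density is at least $\Omega(p(U_{k-1})/sepcost^*(I))$. Provided the witness fits the residual budget, the greedy's density inherits the same lower bound, giving $p(U_j) \leq p(U_{j-1})\cdot(1 - \Omega(c(t_j)/sepcost^*(I)))$. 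A telescoping/geometric-series computation then bounds $\sum_j c(t_j)\,p(U_{j-1}) = O(sepcost^*(I))$, mirroring the Feige--Lov\'asz--Tetali $O(1)$-approximation for min-sum set cover with costs.

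The hard part will be reconciling the budget cap $c(t) \leq B - spent$ with the density comparison: the witness test produced by the averaging may fail the cap, and the while loop may terminate with non-negligible uncovered mass. To cope with this I will use Lemma~\ref{lemma:length} ($B \leq totcost^*(I)$) to control how often oversized tests can arise, and I will split the tail term $(B - spent)\,p(U_{k_0})$ into two cases. When $B - spent$ is small, the tail is dominated by the accumulated greedy terms. When $B - spent$ is large, the termination condition forces every remaining test to be expensive; combined with the averaging argument on ${\bf t}^*$ restricted to $U_{k_0}$, this bounds $p(U_{k_0})$ in terms of $sepcost^*(I)/(B - spent)$, so the product stays $O(sepcost^*(I))$. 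Summing these contributions will yield $sepcost_B(I,{\bf t}^A) \leq \gamma \cdot sepcost^*(I)$ for a suitable absolute constant $\gamma$.
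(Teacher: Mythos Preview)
Your integral rewriting of $sepcost_B$ is correct, and your tail-term argument is in fact cleaner than you suggest: for every $x\in U_{k_0}$, the first test of ${\bf t}^*$ with $x\in\sigma(\cdot)$ cannot be one greedy already used (since $U_{k_0}\subseteq S^*_{t_j}$ for each chosen $t_j$), hence has cost $> B-spent$, so $sep^*(x)>B-spent$ and $(B-spent)\,p(U_{k_0})\le \sum_{x\in U_{k_0}}p(x)\,sep^*(x)\le sepcost^*(I)$ unconditionally.

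The gap is in the main term. The density bound you assert---an affordable witness with density $\Omega(p(U_{k-1})/sepcost^*(I))$---is false. The averaging inequality $\sum_j c(t_j^*)\,\delta_j^*(U)\le sepcost^*(I)$ does not produce it: the best you can extract by choosing a threshold $\tau$ and applying Markov is a witness in the prefix of ${\bf t}^*$ of cost $\le\tau$ with density $\ge (p(U)-sepcost^*/\tau)/\tau$, which is maximized at $\Theta(p(U)^2/sepcost^*)$, not $\Theta(p(U)/sepcost^*)$. Concretely, take three objects in three classes with $p_1=0.9$, $p_2=p_3=0.05$, a test of cost $1$ covering $\{s_1\}$, and tests of cost $100$ covering $\{s_2\}$ and $\{s_3\}$; then $sepcost^*=11$, but on $U=\{s_2,s_3\}$ every test has density $\le 0.0005 \ll p(U)/sepcost^*\approx 0.009$. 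With only the $p(U)^2$ bound, your telescoping yields $O(sepcost^*\cdot\log(1/p_{\min}))$, not $O(sepcost^*)$. A genuine constant-factor bound for weighted min-sum set cover comes from a histogram/doubling comparison (greedy's uncovered-mass curve versus OPT's), not from a one-step density recurrence.

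Even granting a correct FLT-style histogram argument for the \emph{unconstrained} greedy, the shrinking cap $c(t)\le B-spent$ is a separate obstacle: the histogram comparison needs, at each scale, that greedy could have selected the witness tests from the corresponding prefix of ${\bf t}^*$, and those tests may exceed the residual budget. Your appeal to Lemma~\ref{lemma:length} does not help, because $B\le totcost^*(I)$ says nothing about $sepcost^*(I)$ (one can have $totcost^*\gg sepcost^*$). This is exactly why the paper does \emph{not} argue directly: it passes to an auxiliary instance $I'$ in which every test is split into $2c(t)$ micro-tests of cost $1/2$, shows $sepcost^*(I')\le sepcost^*(I)$, runs the doubling/Wolsey comparison there (uniform costs make the budget constraint innocuous and Theorem~\ref{theo:wolsey} applies cleanly at each scale), and then argues that a greedy run on $I'$ simulates ${\bf t}^A$ up to a factor $2$. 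Your plan would need a substitute for this smoothing step, and none is indicated.
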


\begin{lemma} \label{lemma:key3} 
The sequence ${\bf t}_I$ covers at least $\alpha^2 P(S)$ pairs and it holds that 
$totcost(I, {\bf t}_I) \leq 3B.$
\end{lemma}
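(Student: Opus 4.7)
The plan is to decompose ${\bf t}_I = {\bf t}^A {\bf t}^B$, where ${\bf t}^A$ is the sequence of tests produced by the {\bf while} loop and ${\bf t}^B$ the sequence produced by the {\bf repeat-until} loop of Algorithm~\ref{algo:main}, and to handle the two claims of the lemma separately.

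The cost bound is almost immediate. By Lemma~\ref{lemma:key2}, $totcost(I, {\bf t}^A) \leq B$. In the fourth block, every selected test satisfies $c(t_k) \leq B$ (line~\ref{line:main-greedy2}), and the {\bf repeat-until} exits as soon as $spent_2$ first exceeds $B$; therefore $spent_2 \leq B$ before the last iteration and $totcost(I, {\bf t}^B) \leq 2B$ after it. Adding the two bounds yields $totcost(I, {\bf t}_I) \leq 3B$.

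For the coverage claim I would first identify the fourth block with an execution of {\tt Adapted-Greedy} with budget $B$, test set $T' := T \setminus {\bf t}^A$, and submodular function
\[
g(R) \;:=\; P(U^B) - P\bigl(U^B \cap \textstyle\bigcap_{t \in R} S^*_t\bigr),
\]
where $U^B$ denotes the value of $U$ on entry to the loop. An immediate unfolding gives the bridge identity $f_1({\bf t}^A \cup R) = f_1({\bf t}^A) + g(R)$ for every $R \subseteq T'$. Next, I invoke the definition of $B$ via {\tt FindBudget}: there exists $\overline{R} \subseteq T$ with $c(\overline{R}) \leq B$ and $f_1(\overline{R}) \geq \alpha P(S)$. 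Setting $\overline{R}' := \overline{R} \setminus {\bf t}^A \subseteq T'$ gives $c(\overline{R}') \leq B$, and by monotonicity of $f_1$,
\[
f_1({\bf t}^A \cup \overline{R}') \;=\; f_1({\bf t}^A \cup \overline{R}) \;\geq\; f_1(\overline{R}) \;\geq\; \alpha P(S),
\]
so $g(\overline{R}') \geq \alpha P(S) - f_1({\bf t}^A)$. Thus the optimum of ${\cal P}(g, B, T', {\bf c})$ is at least $\alpha P(S) - f_1({\bf t}^A)$, and Corollary~\ref{cor:wolsey} applied to the fourth block yields $g({\bf t}^B) \geq \alpha(\alpha P(S) - f_1({\bf t}^A))$. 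Combining,
\[
f_1({\bf t}_I) \;=\; f_1({\bf t}^A) + g({\bf t}^B) \;\geq\; \alpha^2 P(S) + (1-\alpha)\, f_1({\bf t}^A) \;\geq\; \alpha^2 P(S).
\]

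The step I expect to require the most care is the identification of the fourth block as Adapted-Greedy applied to the residual function $g$ over the reduced test set $T'$, together with the monotonicity/restriction argument that converts a feasible solution $\overline{R}$ for the problem that defined $B$ into a feasible solution $\overline{R}' \subseteq T'$ for ${\cal P}(g, B, T', {\bf c})$ of comparable value. Once this plumbing is in place, Corollary~\ref{cor:wolsey} disposes of the pair-coverage and cost parts in one shot; the degenerate case in which the fourth block performs no iterations can be handled separately, but reduces to $\overline{R} \subseteq {\bf t}^A$, in which case $f_1({\bf t}^A) \geq \alpha P(S) \geq \alpha^2 P(S)$ already.
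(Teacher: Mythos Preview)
Your proposal is correct and follows essentially the same approach as the paper: decompose ${\bf t}_I$ into ${\bf t}^A$ and ${\bf t}^B$, use the budget constraint of the {\bf while} loop (your citation of Lemma~\ref{lemma:key2}) together with Corollary~\ref{cor:wolsey} for the cost bound, identify the {\bf repeat--until} block with {\tt Adapted-Greedy} on the residual function $g$ over $T\setminus{\bf t}^A$, exhibit $\overline{R}\setminus{\bf t}^A$ as a feasible solution of value at least $\alpha P(S)-f_1({\bf t}^A)$, and apply Corollary~\ref{cor:wolsey} to obtain $f_1({\bf t}_I)\geq\alpha^2 P(S)+(1-\alpha)f_1({\bf t}^A)$. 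The paper writes the same argument with $z$ in place of your $f_1({\bf t}^A)$ and leaves the bridge identity implicit; your explicit handling of the degenerate case is a nice touch not spelled out in the paper.
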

\begin{proof}
The sequence ${\bf t}_I$ can be decomposed
into the sequences ${\bf t}^A$ and ${\bf t}^B$,
that are constructed, respectively, in the {\bf while} and {\bf repeat-until}
loop of the algorithm {\tt DecTree}  (see also Fig. 2).

It follows  from the definition of $B$ that there is a sequence of tests, say ${\bf t}$,  of total cost not larger than $B$ 
that covers at least $\alpha P(S)$ pairs for instance $I$. 
Let $z$ be the number of pairs of instance $I$ covered by the sequence ${\bf t}^A$.
Thus, the tests in ${\bf t}$, that do not belong to ${\bf t}^A$, cover
 at least $\alpha P(S) - z $ pairs in the set $U =\bigcap_{t \in {\bf t}^A} S^*_t$ of  objects not covered by ${\bf t}^A$.
 
 The sequence ${\bf t}^B$ coincides with the concatenation of the two possible outputs of the procedure 
 {\tt Adapted-Greedy}($U, T-{\bf t}^{A}, f', {\bf c}, B)$ (Algorithm 1), when it is executed on 
 the instance defined by:  the objects in $U$ (those not covered by ${\bf t}^A$);  the 
 tests that are not in ${\bf t}^A;$ the submodular  set function $f': R \subseteq T - {\bf t}^A \mapsto P(S) - P( U \cap \left ( \bigcap_{t \in R} S_t^* \right ) )$ and 
 bound $B.$ By Corollary \ref{cor:wolsey}, we have that $totcost(I, {\bf t}^B) \leq 2B$ and 
 ${\bf t}^B$ covers at least  $ \alpha( \alpha P(S) - z )$ uncovered pairs.
 
Therefore, since $totcost(I, {\bf t}^A) \leq B,$ altogether, we have that 
${\bf t}_I$ covers at least $ z+ \alpha( \alpha P(S) - z ) \geq \alpha^2 P(S)$ pairs and $totcost(I, {\bf t}_I) \leq 3B.$
\end{proof}

The proof of Theorem \ref{theo:key} will now follow by combining the previous three lemmas.

\medskip

\noindent
{\bf Proof of Theorem \ref{theo:key}.}
First, it follows from Lemma \ref{lemma:key3} that
${\bf t}_I$ covers at least $\alpha^2 P(S)$ pairs.

To prove that  $sepcost(I, {\bf t}_I)\leq sepcost^*(I)$, we decompose ${\bf t}_I$ into 
 ${\bf t}^A= t_1^A,\ldots,t^A_q$ and ${\bf t}^B=t_1^B,\ldots,t^B_r$,
 the sequences of tests selected in the {\bf while } and in the {\bf repeat-until }
 loop of Algorithm \ref{algo:main}, respectively.

For $i=1,\ldots,q$, let $\pi_i = \sigma(t^A_i) \setminus ( \bigcup_{j=1}^{i-1} \sigma(t^A_j))$.
In addition, let $\pi_A$ be the set of objects which are not covered  by the tests in ${\bf t}^A.$
Thus,
\begin{eqnarray*}
sepcost(I, {\bf t}_I) &\leq& \sum_{i=1}^q p(\pi_i) \left ( \sum_{j=1}^{i} c(t^A_j) \right ) + 3B \cdot p(\pi_A)\\
&\leq&  3sepcost_B(I, {\bf t^A}) \leq 3 \gamma sepcost^*(I),
\end{eqnarray*} 
where the last inequality follows from Lemma \ref{lemma:key2}.

It remains to show that $totcost(I, {\bf t}_I) \leq 3 totcost^*(I)$.
This inequality holds because Lemma \ref{lemma:key3} assures that 
$ totcost(I, {\bf t}_I) \leq 3B$ and Lemma \ref{lemma:length} assures that 
$totcost^*(I) \geq B$.
The proof is complete.

\section{$O(\log n)$ is the best possible approximation.}  \label{sec:inapprox}
Let $U=\{u_1,\ldots,u_n\}$ be a set of $n$ elements and
${\cal F}$ be a family of subsets of $U$.
The minimum set cover problem asks for a family ${\cal F}' \subseteq {\cal F}$ of minimum cardinality
such that $\bigcup_{F \in {\cal F}'} F = U$.
It is known that no sub logarithmic approximation is achievable for the minimum set cover problem under the standard 
assumption that  $P \neq NP.$ More precisely, by the result of Raz and Safra \cite{Raz-Safra} it follows that 
there exists a constant $\tilde{k} > 0$ such that no 
$\tilde{k} \, log_2 n$-approximation algorithm for the  minimum set cover problem  exists
unless $P = NP$ \cite{Raz-Safra,Feige-98}.

We will show that  an $o( \log n)$ approximation algorithm for the 
 minimization of the expected testing cost  DFEP with exact $b$ classes for $b \geq 2$,  implies that
 the same approximation can be achieved for the 
Minimum Set Cover problem. This implies  that  one cannot expect to obtain a sublogarithmic approximation
for the DFEP unless $P=NP$. The reduction we present can also be used to show the same inapproximability result for the minimization of the 
worst testing cost version of the DFEP.

Given an instance $I_{SC} = (U, {\cal F})$ for the minimum set cover problem as defined above, we construct an instance 
$I_{DFEP}=(S, C, T, {\bf p}, {\bf c})$ for the DFEP as follows:
The set of objects is $S = U \cup \{o_1,\ldots,o_{b-1}\}$. The family of classes $C = (C_0, \dots, C_{b-1})$ is defined as follows:
All the objects of  $U$ belong to class $C_0$ while the object $o_i$, for $i=1,\ldots,b-1$,  belongs
to class $C_i$. Notice that $b_1$ and the objects of
$U$ belong to the same class. In order to define the set of tests $T,$ we proceed as follows: For each set $F \in {\cal F}$ we create a test $t_F$ such that
$t_F$ has value $0$ for the objects in $F$ and value 1 for the remaining objects.
In addition, we create a test $\tilde{t}$ which has value $0$ for objects in $U$ and
value $i-1$ for object $o_i$ ($1 \leq i \leq b-1$). 
For our later purposes, we notice here that the test $\tilde{t}$ cannot distinguish between $b_1$ and the elements in $U$.

Each test has cost 1, i.e., the cost assignment {\bf c} is given by $c(t) = 1$ for each $t \in T.$   
Finally, we set the probability of $o_1$ to be equal to $1- (n+b-2)\eta$ and the probability of the other objects equal to $\eta$,
for some fixed $\eta < \frac{1}{2(n+b-2)}.$

Let $D^*$ be the  decision tree  with minimum expected testing cost  for $I_{DFEP}$ and 
let ${\cal F}^*=\{F_{1},\ldots,F_{h}\}$ be a minimum set cover for instance $I_{SC}= (U,{\cal F})$, where $h=|{\cal F}^*|$.

We first argue that $cost_E(D^*) \leq h+1$. In fact, we can construct a decision tree $D$ 
by putting the test $t_{F_1}$ associated with $F_{1}$ in the root of the tree, then
the test $t_{F_2}$ associated with $F_{2}$ as the child of $t_{F_1}$ and so on. 
Notice that, for $i=1, \dots, h-1$ we have that $t_{F_i}$ has two children, one is $t_{F_{i+1}}$ and the other is a leaf  mapping to the class $C_0.$
As for $t_F{h},$ one of its children in again a leaf mapping to $C_0$, the other child is set to the test $\tilde{t}$, whose children are all leaves.

The expected testing cost  of $D$ can be upper bounded by 
\begin{equation} \label{inapprox:1}
OPT_E(I_{DFEP}) = cost_E(D^*) \leq cost_E(D) \leq  (h+1) = OPT(I_{SC}) +1
\end{equation}
since we have $cost(D, s) = (h+1)$ for any $s \in \{o_1, \dots, o_{b-1}\}$ and  
$cost(D, s) \leq h+1$ for any $s \in  U$.

On the other hand, let $D$ be a decision tree for $I_{DFEP}$ and let $P$ be the path from the root of $D$ to the leaf 
where the object $o_1$ lies. It is easy to realize that the subsets associated with the tests on this path cover all the
elements in $U$---in fact these tests separate $o_1$ from all the other objects from $U.$
Let ${\cal T}$ be the solution to the set cover problem provided by the sets associated with the tests on the path $P.$
We have  that 
\begin{equation} \label{inapprox:2}
|{\cal T}| \leq cost(D, o_1)  \leq \frac{cost_E(D)}{1- \eta(n+b-2)} \leq   2 cost_E(D),
\end{equation}
in the last inequality we are using the fact that $\eta \leq \frac{1}{2(n+b-2)}.$ 
\bigskip

Now assume that there is an algorithm that  for any instance $I = (S, C, T, {\bf p}, {\bf c})$ of the DFEP can guarantee a solution 
 with approximation $\alpha \log |S|$ for some $\alpha < \tilde{k}/8.$ 
Therefore, given an instance $I_{SC} = (U, {\cal F})$ for set cover we can use this algorithm on the transformed instance $I_{DFEP}$ defined 
above, where $|S| = |U|+b-1.$ 
We obtain a  decision tree $D$ for $I_{DFEP}$ such that 
$$cost_E(D) \leq \alpha \log(n+b-1) OPT_E(I_{DFEP}) \leq \alpha (OPT(I_{SC})+1) \log (n+b) \leq 4 \alpha \log n OPT(I_{SC})$$
where we upper bound $OPT(I_{SC}) + 1 \leq 2 OPT(I_{SC})$ and $\log (n+b) \leq 2\log n.$

From $D$, as seen above we can construct a solution ${\cal T}$ for the set cover problem such that $|{\cal T}| \leq 2 cost_E(D).$
Hence, it would follow that ${\cal T}$ is an approximate solution for the set cover instance satisfying:
$$|{\cal T}| \leq 8 \alpha \log n OPT(I_{SC}) <  (\tilde{k} \, \log n) OPT(I_{SC})$$ which by the result of \cite{Raz-Safra} is not possible unless $P = NP.$

\bigskip

The same construction can be used for analyzing the case of the worst testing cost, in which case we have that (\ref{inapprox:1})
becomes $OPT_W(I_{DFEP}) \leq OPT(I_{SC})+1$ and (\ref{inapprox:2}) becomes $|{\cal T}| \leq cost_W(D),$ leading to the inapproximability
of the DFEP w.r.t. the worst testing cost within a factor of $\alpha \log n$ for any $\alpha <  \tilde{k}/4.$ Notice that an analogous result regarding the 
worst testing cost had been previously shown by Moshkov \cite{Moshkov1} based on the result of Feige \cite{Feige-98}.

Thus, we have the following theorem

\begin{theorem} \label{theo:inappr}
Both the minimization of the  worst case and  the expected case
of the DFEP don't admit an $o(\log n)$ approximation unless $P=NP.$ 
\end{theorem}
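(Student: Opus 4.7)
The plan is to prove the inapproximability result via a reduction from Minimum Set Cover, invoking the classical Raz--Safra lower bound: there exists $\tilde{k} > 0$ such that no $\tilde{k}\log_2 n$-approximation for Minimum Set Cover exists unless $P = NP$. Given an instance $I_{SC} = (U,\mathcal{F})$ with $|U| = n$, I would build an instance $I_{DFEP} = (S,C,T,\mathbf{p},\mathbf{c})$ of the DFEP with exactly $b$ classes, reusing the construction already set up just before the theorem: take $S = U \cup \{o_1,\dots,o_{b-1}\}$, place every element of $U$ into $C_0$ and each $o_i$ into $C_i$, introduce one test $t_F$ per set $F \in \mathcal{F}$ whose $0$-branch is $F$, and one additional test $\tilde{t}$ that separates the singletons $o_1,\ldots,o_{b-1}$ (but cannot distinguish any $o_i$ from an element of $U$). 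All tests are unit cost, and the probability distribution is skewed so that $p(o_1) = 1 - (n+b-2)\eta$ for small $\eta < 1/(2(n+b-2))$ and all remaining objects have probability $\eta$.

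The first step is the \emph{upper bound}: from any minimum set cover $\mathcal{F}^* = \{F_1,\ldots,F_h\}$ I build a decision tree by cascading the corresponding tests $t_{F_1},\ldots,t_{F_h}$ along a single path, with $\tilde{t}$ appended at the bottom to separate the $o_i$'s. Since the union of $F_1,\ldots,F_h$ is $U$, each element of $U$ is isolated into a $C_0$-leaf at some point along the cascade, and the $o_i$'s all reach the bottom. This yields both $\text{cost}_E(D), \text{cost}_W(D) \leq h+1$, hence $OPT_E(I_{DFEP}), OPT_W(I_{DFEP}) \leq OPT(I_{SC}) + 1$.

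The second step is the \emph{lower bound}: for any decision tree $D$ for $I_{DFEP}$, consider the root-to-leaf path $P$ that ends in the leaf containing $o_1$. Since $o_1$ must be separated from every element of $U$ (they lie in different classes), the $0$-branches of the tests $t_F$ occurring along $P$ together cover all of $U$, so the corresponding family $\mathcal{T}$ is a set cover. Because $p(o_1) \geq 1/2$, any decision tree pays at least $|\mathcal{T}|/2$ in expectation, giving $|\mathcal{T}| \leq 2\,\text{cost}_E(D)$; trivially $|\mathcal{T}| \leq \text{cost}_W(D)$. The test $\tilde{t}$, which cannot serve to cover any $U$-element vs.\ $o_1$, costs at most $1$ extra and does not affect the asymptotics.

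Finally, suppose some algorithm achieves approximation $\alpha \log|S|$ for the DFEP with $\alpha < \tilde{k}/8$. Applied to $I_{DFEP}$ and combined with the two bounds above (using $\log(n+b) \leq 2\log n$ and $OPT(I_{SC}) + 1 \leq 2\, OPT(I_{SC})$), this produces a set cover of size at most $8\alpha \log n \cdot OPT(I_{SC}) < \tilde{k} \log n \cdot OPT(I_{SC})$, contradicting Raz--Safra. The identical argument with $\text{cost}_W$ in place of $\text{cost}_E$ (using $|\mathcal{T}| \leq \text{cost}_W(D)$ directly, so the loss factor becomes $4$ instead of $8$) handles the worst-case version. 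The main subtle point to get right is the lower-bound step: one must argue carefully that the tests appearing on the path to $o_1$, restricted to their $0$-branches, genuinely cover $U$, which relies on the fact that $o_1$ and every $u \in U$ lie in different classes and hence must be separated along that path, and that tests other than the $t_F$'s (namely $\tilde{t}$) contribute nothing to covering $U$.
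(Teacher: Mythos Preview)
Your proposal is correct and follows essentially the same approach as the paper's proof: the same Set Cover reduction (with the identical construction of $S$, classes, tests $t_F$ and $\tilde{t}$, and the skewed distribution on $o_1$), the same upper bound via cascading a minimum cover plus $\tilde{t}$, the same lower bound by reading a set cover off the root-to-$o_1$ path, and the same final contradiction with the Raz--Safra constant (factors $8$ and $4$ for the expected and worst case, respectively). The only cosmetic difference is that you phrase the expected-cost lower bound as ``$p(o_1)\geq 1/2$ so $cost_E(D)\geq |\mathcal{T}|/2$'' whereas the paper writes the equivalent $|\mathcal{T}|\leq cost(D,o_1)\leq cost_E(D)/(1-\eta(n+b-2))\leq 2\,cost_E(D)$.
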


 \section{Final remarks and Future Directions}
\label{sec:experiment}
We presented a new algorithm for the discrete function evaluation problem, a generalization of 
 the classical Bayesian active learning also studied under the names of 
 Equivalence Class Determination Problem \cite{golovin} and Group Based Active Query Selection problem of \cite{bellala}.
Our algorithm  builds a decision tree which asymptotically matches simultaneously for the expected and the worst testing cost the best possible approximation 
achievable under standard complexity assumptions $P \neq NP.$
This way,  we close the gap left open by the previous $O(\log 1/p_{\min})$ approximation for the expected cost shown in\cite{golovin} and  \cite{bellala}, where  $p_{min}$ is the minimum positive probability among the objects in $S$ and in addition we show that this can be done with an algorithm that guarantees the best possible approximation also with respect to the worst testing cost.

With regards to the broader context of learning,  given a set of samples labeled according to an unknown function,  
a standard task in machine learning is to find a good approximation of the labeling function (hypothesis). 
 In order to guarantee that the hypothesis chosen has some generalization power w.r.t.\ to  the  set of samples, 
 we should avoid overfitting. When the learning is performed via decision tree induction this implies that 
 we  shall not have leaves associated with a small number of samples  so that we end up with a decision 
 tree that have leaves associated with more than one  label. 
 There are many strategies available in the literature  to induce such a  tree. 

In the problem considered in this paper our aim is to over-fit the data because the function is 
 known a priori  and we are interested in obtaining a decision tree that allows us to identify the label of 
a new sample with the minimum possible cost (time/money). 
The theoretical results we obtain for the "fitting" problem should be generalizable to the problem of approximating the function. 
To this aim we could employ the framework of covering and learning from \cite{conf/icml/GuilloryB11} along the following lines: 
 we would  interrupt the recursive process in Algorithm 2 through which we  construct the  
 tree as soon as we reach a certain level of learning (fitting) w.r.t.  the set of labeled samples. 
Then, it remains to  show that our decision tree is at logarithmic factor of the optimal one for that level of learning.
 This is an interesting direction for future research.

\newpage

\appendix

\section{The proof of Lemma \ref{lemma:key2}}

\noindent
{\bf Lemma \ref{lemma:key2}.} 
{\em Let  
${\bf t}^A$ be the sequence obtained by concatenating the tests selected in the {\bf while} loop of Algorithm 
2.
Then,  $totcost(I, {\bf t}^A) \leq B$ and
 $sepcost_B(I, {\bf t}^A) \leq \gamma \cdot  sepcost^*(I),$
where  $\gamma$ is a positive constant and $B$ is the budget calculated at line \ref{line:budget}.
}

\begin{proof}
Clearly, the Algorithm  
2 in  the  {\bf while} loop constructs a sequence ${\bf t}^A$ such that $$totcost(I, {\bf t}^A) \leq B.$$

In order to prove the second inequality in the statement of the lemma, it will be convenient to 
perform the analysis in terms of  a variant of our problem which is explicitly defined with respect
to the separation cost of a sequence of tests. 
We call this new problem the {\em Pair Separation Problem} (PSP):
The input to  the PSP,  as in the DFEP, is a 5-uple $(S, C,  {\cal X}, {\bf p}, {\bf c})$,
where $S = \{s_1, \dots, s_n\}$ is a set of objects, 
$C = \{C_1, \dots, C_m\}$ is a partition of $S$ into $m$ classes, $ {\cal X}$ is a family of subsets of $S$, ${\bf p}$ is a probability distribution on  $S,$ and ${\bf c}$ is a cost function assigning to each $X \in {\cal X}$ a cost $c(X) \in \mathbb{Q^+}.$ 
The only difference between the input of these problems is that the set of tests  $T$ in the input of DFEP is replaced with a family  ${\cal X}$ of subsets of $S$.
We say that $X \in {\cal X}$  covers an object $s$ iff  $s \in X $.
Moreover, we say that $X \in {\cal X}$  covers a pair of objects $(s,s')$ if at least one of the conditions hold: (i) $s \in X $ or (ii)
$s' \in X$.
We say that a pair  $(s,s')$ is covered by a sequence of tests if some test in the sequence covers  $(s,s')$.
The separation cost of a sequence ${\bf X}=X_1 X_2\ldots X_q$ in the instance $I_P$ of PSP is given by:

\begin{equation} 
sepcost(I, {\bf X}) = \sum_{i=1}^q p \left (X_i \setminus \bigcup_{j=1}^{i-1} X_j  \right) \left(\sum_{j=1}^i c(X_j) \right) + 
p \left ( S- \bigcup_{j=1}^q X_j \right ) \sum_{j=1}^q c(X_j).
\end{equation}

The 
{\em Pair Separation Problem}  consists of finding 
a sequence of subsets of ${\cal X}$ with minimum separation cost, $sepcost^*(I_P)$, among those sequences that cover all pairs in $S$.

An instance $I=(S, C, T, {\bf p}, {\bf c})$ of the DFEP induces an instance  $I_{P}=(S, C, {\cal X}, {\bf p}, {\bf c})$ of the PSP
where $|T|= |{\cal X}|$ and for every test $t \in T$ we have a corresponding subset $X(t) \in {\cal X}$ such that  $X(t)= \sigma_S(t)$.
Thus, in our discussion we will use the term test  $X$ to refer to a subset $X \in  {\cal X}$.
In the body of this paper we  implicitly work with the instance of the PSP
induced by  the input instance  of the DFEP. It is easy to realize that $sepcost^*(I)=sepcost^*(I_P)$.
In addition, $sepcost_B(I, {\bf t}^A)=sepcost_B(I_P, {\bf X}^A),$
where ${\bf X}^A$ is the sequence obtained from  ${\bf t}^A$  
when every $t \in  {\bf t}^A$  is replaced with $X(t)$.
Thus, in order to establish the lemma it suffices to prove that 
$$sepcost_B(I_P, {\bf X}^A) \leq \gamma \cdot  sepcost^*(I_P).$$

It is useful to observe that ${\bf X}^A$ is equal to  the sequence 
$seq$ returned by  procedure {\tt GreedyPSP} in Algorithm \ref{alg:appendix}
when it is executed on the instance $(I_P,B)$.
This algorithm corresponds to lines 4,5,9 and 10 of
the {\tt While} loop of Algorithm \ref{algo:main}. In Algorithm \ref{alg:appendix},
the greedy criteria consists of choosing the test $X$ that maximizes the ratio $p(U\cap X)/c(X).$ This is equivalent to 
the maximization of  $(p(U)-p(U\cap S^*_t))/c(t)=(p(U\cap \sigma_S(t) )/c(t)$
defining the greedy choice in Algorithm \ref{algo:main}.

\begin{algorithm}[ht]
\caption{} 
\small
{\bf Procedure} {\tt GreedyPSP} ($I_P=(S, C, {\cal X}, {\bf p})$: instance of PSP, $B$:Budget)) 
\begin{algorithmic}[1]

\STATE{  $seq \leftarrow \emptyset $, \;  $  U \leftarrow S, \; k\leftarrow 1$  }
\WHILE{ there is a test in ${\cal X}$ of cost  $\leq  B $}
  \STATE  {let $X_k$ be a test which maximizes $ \frac{p(U \cap X) }{c(X)}$ among all tests $X \in {\cal X} $  s.t.  $c(X) \leq  B $} \,\,\ (*)
\STATE { Append $X_k$ to $seq$, \; $U\leftarrow U \setminus X_k, \;  B \leftarrow B- c(X_k)$ \;, ${\cal X} \leftarrow  {\cal X} \setminus \{X_k\}$, \; $k\leftarrow k+1$}	
 \ENDWHILE
\STATE{\bf end while}
\end{algorithmic}
 \label{alg:appendix}
\end{algorithm}
\normalsize

\medskip

The proof  consists of the following steps: 
\begin{enumerate}
\item [i]  We construct an instance   $I'=(S', C', {\cal X}', {\bf p'}, {\bf c}')$ of the PSP 
from   $I_P$
\item [ii] We prove that the optimal separation cost
for $I'$ is no larger  than the optimal one for $I_P$, that is, $sepcost^*(I') \leq  sepcost^*(I_P)$. 
\item [iii]
we prove that separation cost $sepcost(I',{\bf X'})$ of any sequence of tests ${\bf X'}$ returned by the above pseudo-code on the instance $(I',B)$ 
is at a constant factor of $sepcost^*(I')$, that is, $sepcost(I',{\bf X'})$ is $O(sepcost^*(I'))$.
\item[iv]  we prove that there exists a sequence of tests ${\bf Z}$ possibly returned by {\tt GreedyPSP} when 
executed on the instance $(I', B)$ such that  
$sepcost_B(I_P,{\bf X}^A) \leq 2 sepcost(I',{\bf Z})$.
\end{enumerate}

By chaining these inequalities, we conclude that  
$sepcost_B(I_P,{\bf X}^A) $ is $O(sepcost^*(I_P)).$
The steps (ii), (iii) and (iv) are proved in Claims 1,2 and 3, respectively.
We start with the construction of instance $I'$.

\medskip

\noindent {\it Construction of instance $I'$.}
For every test $X \in {\cal X}$,  we define $n(X)= 2c(X)$.

The instance $I' = (S', C', {\cal X}', {\bf p'}, {\bf c}')$ is constructed from $I_{P} = (S, C, {\cal X}, {\bf p}, {\bf c})$ as follows.
Let $N=\prod_{X \in {\cal X}} n(X)$.
For each  $s \in S $ we add   $N$ objects to $S'$,
each of them with probability $p(s)/N$ and with class equal to that of  $s$.
If an object $s'$ is added to set $S'$ due to $s$, we say that
$s'$ is generated from $s$. 

For every test $X \in {\cal X}$ we add  
$n(X)$ tests to the set ${\cal X}'$, each of them with cost $1/2$.   
If a test $X'$ is added to set ${\cal X}'$ due to $X$, we say that
$X'$ is generated from $X$.

It remains to define to which subset of $S'$ each test $X' \in {\cal X}'$ corresponds to.
If $s \notin X$ then 
$s' \notin X' $ for every $s'$ generated from $s$ and
every $X'$ generated from $X$.
Let ${\cal X}_s=\{X^1,\ldots,X^{|{\cal X}_s|} \}$ be the set of tests that contains the object $s \in S$.
Note that the number of tuples $(\theta^1,\ldots,\theta^{|{\cal X}_s|} )$, where $ \theta^i \in {\cal X}' $ is
a test generated from $X^i \in {\cal X}_s$ is $\prod_{X \in {\cal X}_s} n(X)$.
 Thus, we
create a one to one correspondence between these tuples and the  numbers in the set $Poss(s)=\{1,\ldots, \prod_{X \in {\cal X}_s} n(X)\}$.
For a test $\theta \in {\cal X}'$, generated from $X \in {\cal X}_s$, let $F(\theta) \subset Poss(s)$
be the set of numbers
that correspond to the tuples that includes $\theta$.
Note that 
\begin{equation}
|F(\theta)|=\left ( \prod_{Y \in {\cal X}_s} n(Y) \right ) / n(X).
\end{equation}
In addition, we associate each object $s'\in S'$, 
generated from $s$, with a number $f(s') \in Poss(s)$ in a balanced way
so that each number in $Poss(s)$ is associated with $N/\prod_{X \in {\cal X}_s} n(X)$ objects.
Thus, a test $\theta \in {\cal X}'$, generated from $X \in {\cal X}_s$, covers an object $s'$ generated from $s$ if and only if $f(s') \in F(\theta)$.

For the instance $I'$ we have the following useful properties: 
\begin{itemize} 

\item[a] if  $X\in {\cal X}$ covers object  $s \in S$ then  each test $\theta\in {\cal X}'$, generated from $X$, covers
exactly $N/n(X)$ objects generated from $s$. Moreover,
each object generated from $s$ is covered by exactly one test generated 
from $X$.
 
\remove{
\item[b] For some $j \ge 1$, let $X^1,X^2,\ldots,X^j$ be a sequence of tests in ${\cal X}$ (the set of tests from instance $I_P$). In addition,
let $seq$ be a sequence of tests in ${\cal X}'$ (the set of tests from instance $I'$), consisting
of  $n(X^1)$ distinct tests generated from $X^1$, followed by   $n(X^2)$  distinct tests generated
from $X^2$ and so on, ending with a sequence of $n(X^j)$ distinct tests generated
from $X^j$. Let $\theta^i_h$ be the $h$-th test generated from
$X^i$ in $seq$.
Then, for $1 \leq i \leq j$ and
$1 \leq h \leq n(X^i)$, we have that

\begin{equation} \frac{p(X^i - ( X^1 \cup \ldots \cup X^{i-1}))}{c(X^i)} = \frac{p(\theta^i_h - seq^i_{h-1})}{c(\theta^i_h)},
\end{equation}
where $seq^i_{h-1}$ is the set of objects  of the sequence of tests that precedes 
$\theta^i_h$ in  sequence $seq$.
This property will be useful
to compare, in Claim 3, the sequences of tests 
generated by Algorithm \ref{alg:appendix}  when it is
executed over instances $(I_P,B)$ and $(I',B)$.
}



\item[b] 
If a set of tests $G' \subseteq {\cal X}'$  covers all pairs of $I'$ then
the set $G=\{ X \in {\cal X}|$ all tests generated from $X$ belong to $G' \}$
covers all pairs of $I_P$. 
\end{itemize}

\noindent
Property (a) holds because a test $\theta$ generated by $X$ is 
associated with $|F(\theta)|=\prod_{Y \in {\cal X}_s} n(Y) / n(X)$ numbers in $Poss(s)$ and
to each number in $Poss(s)$ we have $N/\prod_{Y \in {\cal X}_s} n(Y)$ objects associated with.

\remove{
To see that property (b) holds, let $\pi_i=X^i-(X^1 \cup \ldots \cup X^{i-1})$. 
Note that if an object 
$s$ belongs to 
$\pi_i$ then, by property (a), each of the $n(X^i)$ tests generated from 
$X^i$ covers exactly $N/n(X^i)$ objects generated from $s$.
Thus, the sum of the probabilities of the objects
covered by  $\theta^i_h$ that
are not in $seq^i_{h-1}$, namely  $p(\theta^i_h - seq^i_{h-1})$, satisfies 
$$ p(\theta^i_h - seq^i_{h-1}) = \sum_{s \in \pi_i} \frac{p(s)}{N} \frac{N}{n(X^i)} = \frac{p(\pi_i)}{ n(X^i)}.$$
Thus, 

$$\frac{ p(\theta^i_h - seq^i_{h-1})}{c(\theta^i_h)}= \frac{ 2p(\pi_i) }{ n(X^i)} =\frac{ p(\pi_i) }{ c(X^i)}=\frac{p(X^i - ( X^1 \cup \ldots \cup X^{i-1}))}{c(X^i)}$$
}


To see that property (b) holds, let us  assume that $G'$ covers all pairs of the instance $I'$ and 
$G$ does not cover a pair $(s_1,s_2)$. Let  ${\cal X}_{s_1}=\{X_1^1,\ldots,X_1^x\}$ and
${\cal X}_{s_2}=\{X_2^1,\ldots,X_2^y\}$ be  
the set of tests that covers  $s_1$ and $s_2$,  respectively. 
The fact that $G$ does not cover  $(s_1,s_2)$ implies that $ ({\cal X}_{s_1} \cup {\cal X}_{s_2}) \cap G = \emptyset$ so that
 for each $X_1^i \in {\cal X}_{s_1}$, there is a test $\theta_1^{i}$, generated from $X_1^i$,
that does not belong to $G'$.  Similarly, 
for each $X_2^i \in {\cal X}_{s_2}$, there is a test $\theta_2^{i}$, generated from $X_2^i$,
that does not belong to $G'$.
Let $s'_1$ be an object, generated from $s_1$, that is mapped, via function $f(\cdot)$, into the number in $Poss(s_1)$ that corresponds to the tuple
$(\theta_1^{1},\ldots,\theta_1^{x})$. Moreover, let 
$s'_2$ be an object, generated from $s_2$,
that is mapped, via function $f(\cdot)$, into the number in $Poss(s_2)$ that corresponds to the tuple
$(\theta_2^{1},\ldots,\theta_2^{y})$. 
The pair $(s'_1,s'_2)$ is not covered by $G'$, which is a contradiction.

\medskip

\noindent {\bf Claim 1.}
The optimal separation cost
for $I'$ is no larger than the optimal separation cost for $I_{P},$ i.e., $sepcost^*(I') \leq sepcost^*(I_{P}).$

Given a  sequence ${\bf X}_P$ for 
$I_P$ that covers all $P(S)$ pairs we can obtain 
a  sequence ${\bf X}$ for 
$I'$ by replacing each test $X_P \in  {\bf X}_P$ with
the  $n(X_P)=2 c(X_P)$ tests in ${\cal X}'$ that were generated from 
$X_P$, each of which has cost $1/2.$
It is easy to see that ${\bf X}$ covers all the pairs in $I'$ and  the separation cost of
${\bf X}$ is not larger than that of $ {\bf X}_P$.
This establishes our claim. \mbox

\medskip

Now let
${\bf X'}$ be a sequence of tests  returned by procedure {\tt GreedyPSP} in Algorithm \ref{alg:appendix}  
 when it is executed on the instance $(I',B)$.

\medskip

\noindent {\bf Claim 2.}
The separation cost of the sequence ${\bf X'}$ is at most a constant factor
of that of ${\bf X}^* = X_1^*, \dots, X_{q^*}^*$, which is the  sequence of tests
with minimum separation cost among all sequences of tests covering all the pairs, for the instance 
$I',$ i.e., $sepcost(I', {\bf X'}) \leq \beta  sepcost^*(I'),$ for some constant $\beta.$

Let $p_j$ (resp. $p^*_j$) be the sum of the probabilities of the objects covered by the first $j$ tests in ${\bf X'}$ (resp.\ ${\bf X}^*$). 
In particular, we have $p_0 = p_0^* = 0.$
In addition, let $Q$ be the sum of the  probabilities of all objects in $S'$.
Notice that, with the above notation, we can rewrite the separation cost of the sequence ${\bf X'} = X'_1, \dots, X'_q$ as  
$$sepcost(I', {\bf X'}) = \sum_{j=1}^q c(X'_j) (Q - p_{j-1})=\sum_{j=1}^q (1/2) \cdot (Q - p_{j-1}).$$

Let $\ell$ be such that $ 2^{\ell-1} \leq B \leq 2^{\ell}-1$, where $B$ is the budget in the statement of the lemma.
For $j=0,\ldots, \ell$, let $i_j=2^{j+2}-2 $ and $i^*_j=2^{j+1} $.
Furthermore, let $P^{[j]}$ be the  sum of the probabilities of the objects covered by the first $i_j$
tests of ${\bf X'}$. In formulae,
$P^{[j]} = p \left (\bigcup_{k=1}^{i_j} X'_k \right).$  
Analogously, let $P^{[j]}_*$ be the sum of the probabilities  of the objects covered by the first $i^*_j$ tests in ${\bf X}^*.$  In formulae,  
$P^{[j]}_* = p\left(\bigcup_{k=1}^{i^*_j} X^*_k \right).$ 
For the sake of definiteness, we set $i_{-1} = i^*_{-1} = 0$ and $P^{[-1]} =P^{[-1]}_*=0$

Then, we have

\begin{eqnarray*} 
sepcost(I', {\bf X'}) &=& \sum_{i=1}^q (1/2) \cdot (Q - p_{i-1}) 
\leq  \sum_{j=0}^{\ell-1} \sum_{k = i_{j-1}+1}^{i_j} (1/2) \cdot (Q-p_{k-1}) \\
&\leq& \sum_{j=0}^{\ell-1}  \sum_{k = i_{j-1}+1}^{i_j} (1/2)(Q-P^{[j-1]})  
\leq \sum_{j=0}^{\ell-1} 2^{j} (Q-P^{[j-1]}) \\
&\leq& Q+ \sum_{j=1}^{\ell-1} 2^{j} (Q-P^{[j-1]}) 
\leq Q+ \sum_{j=0}^{\ell-2} 2^{j+1} (Q-P^{[j]}),
\end{eqnarray*}
where the first inequality holds because $q \leq 2B \leq 2^{\ell+1} -2 = i_{\ell-1}$ and
the second one holds because $p_{k-1} \geq P^{[j-1]}= p_{i_{j-1}} $ for $k \geq i_{j-1}+1$.


We now devise a lower bound on 
the separation cost of ${\bf X}^*.$ For this, we 
first  note that the length $q^*$ of  ${\bf X}^*$ is at least $2B \geq 2^{\ell}= i^*_{\ell-1}$,
for otherwise the property (b) of instance $I'$  would guarantee the existence  of a sequence of tests of total cost smaller than $B$ that
covers all pairs for instance $I_P$ (and for the instance $I$ of the DFEP as well), which contradicts Lemma \ref{lemma:length}. Therefore, we can lower bound the
the separation cost of the sequence ${\bf X}^*$ as follows:  

\begin{eqnarray} 
sepcost(I', {\bf X}^*) &=& \sum_{i=1}^{q^*} c(X_i^*) (Q - p^*_{i-1}) \label{eq:27may0} \\
&\geq& \sum_{ i=1}^{i^*_0} (1/2) \cdot (Q - p^*_{i-1})  + \sum_{j=1}^{\ell-1} \sum_{k = i^*_{j-1}+1}^{i_j^*} (1/2) \cdot (Q-p^*_{k-1}) \label{eq:27may1}\\
&\geq&  \frac{2Q- p^*_1 }{2}+ \sum_{j=1}^{\ell-1} (2^{j} - 2^{j-1}) (Q-P_*^{[j]}) \geq  \frac{3Q}{4} + \frac{1}{2} \sum_{j=1}^{\ell-1} 2^{j} (Q-P_*^{[j]}) 
 \label{eq:27may2}
\end{eqnarray}


The inequality in (\ref{eq:27may1}) follows from (\ref{eq:27may0}) by considering in the summation on the 
right hand side of (\ref{eq:27may1})  only 
the first $i^*_{\ell -1} = 2^{\ell} \leq B \leq q^*$ tests.

The term  $(2Q- p^*_1)/2$ in the first inequality (\ref{eq:27may2}) is  the contribution
of the first two tests of the sequence  ${\bf X}^*$ to the separation cost.
To prove that $\frac{2Q- p^*_1 }{2} \geq  \frac{3Q}{4}$ in the last inequality, we note that 
that $p^*_1 \leq Q/2$ because the probability
covered by the first  test $X^*_1$ of sequence ${\bf X^*}$ is
$p(X)/n(X) \leq  Q/2c(X) \leq Q/2$, where $X$ is the test that generates $X^*_1$.
In the last inequality  we used the fact that $c(X) \geq 1$ for all $X \in {\cal X}$.

Let $S'_{k} \subseteq S'$ be the set of objects covered by the sequence of tests $X'_1,X'_2,\ldots,X'_{k}$, which
 is the prefix of length $k$ of the sequence of tests ${\bf X'}$.
We shall note that 
for $l \geq k+1$, the subsequence $X'_{k+1},\ldots,X'_l$ of ${\bf X'}$ coincides with the sequence of tests 
constructed   through the execution of {\tt Adapted-Greedy} over the instance ($S' \setminus S'_{k},\tilde{T}, f_2, {\bf c}', B'$), where 
\begin{itemize}
\item $\tilde{T} ={\cal X}'  \setminus \{X'_1,X'_2,\ldots,X'_k\}$ is a set of tests,  all of them with cost $1/2;$
\item the function $f_2$ maps a set of tests into 
the probability of the objects in $S' \setminus S'_{k}$ that are covered by the tests in the set;
\item $B' = \frac{(l-k)}{2}.$
\end{itemize}

Since the set $\{X^*_{1},X^*_{2},\ldots,X^*_{l-k}\} -  \{X'_1,X'_2,\ldots,X'_{k}\}$ is 
 a feasible  solution
for this instance, it follows from Theorem 2 
that $p_l - p_k \geq \hat{\alpha} (p^*_{l-k} - p_k)$,
where $\hat{\alpha} = 1 - \frac{1}{e}.$ By setting $l=i_j$ and $k=i_{j-1}$ we get that
 $$P^{[j]} - P^{[j-1]} \geq \hat{\alpha} (P_*^{[j-1]} - P^{[j-1]}).$$

It follows that $$Q - P^{[j]} \leq \hat{\alpha}(Q - P_*^{[j-1]}) + (1-\hat{\alpha}) (Q - P^{[j-1]}).$$ 
Thus, setting $$U = Q+ \sum_{j=0}^{\ell-2} 2^{j+1} (Q-P^{[j]}),$$ which is the upper bound we derived on the separation 
cost of the sequence ${\bf X'}$,  we have 

\begin{eqnarray*}
U &=&    Q + \sum_{j=0}^{\ell-2} 2^{j+1} (Q-P^{[j]})\\
&\leq&  Q + \hat{\alpha} \sum_{j=0}^{\ell-2} 2^{j+1} (Q-P_*^{[j-1]}) + (1-\hat{\alpha}) \sum_{j=0}^{\ell-2} 2^{j+1} (Q-P^{[j-1]})\\
&=&  Q + 2\hat{\alpha}Q + \hat{\alpha}\sum_{j=1}^{\ell-2} 2^{j+1} (Q-P_*^{[j-1]}) + 2(1-\hat{\alpha})Q +  (1-\hat{\alpha}) \sum_{j=1}^{\ell-2} 2^{j+1} (Q-P^{[j-1]})\\
&=&  Q + 2 Q +  2\hat{\alpha}  \sum_{j=0}^{\ell-3} 2^{j+1} (Q - P_*^{[j]}) +2 (1-\hat{\alpha})  \sum_{j=0}^{\ell-3} 2^{j+1} (Q-P^{[j]})\\
&\leq& Q + 2 Q + 4 \hat{\alpha} Q  +  2\hat{\alpha}  \sum_{j=1}^{\ell-3} 2^{j+1} (Q - P_*^{[j]}) +2 (1-\hat{\alpha})  \sum_{j=0}^{\ell-3} 2^{j+1} (Q-P^{[j]})\\
&= & (1-2(1-\hat{\alpha})+2+4\hat{\alpha}) Q +   4\hat{\alpha}  \sum_{j=1}^{\ell-3} 2^{j} (Q - P_*^{[j]}) 
+ 2(1-\hat{\alpha})\left(Q + \sum_{j=0}^{\ell-3} 2^{j+1} (Q-P^{[j]})\right)\\
& \leq & (1+6 \hat{\alpha}) Q +   4\hat{\alpha}  \sum_{j=1}^{\ell-1} 2^{j} (Q - P_*^{[j]}) 
+ 2(1-\hat{\alpha})U\\
&\leq& (8\hat{\alpha}+4/3)  \, sepcost(I', {\bf X}^*) +  2(1-\hat{\alpha}) U,
\end{eqnarray*}
where the last inequality follows from equation \ref{eq:27may2}.
Thus,  we obtain $$sepcost(I', {\bf X'}) \leq U \leq  \frac{(8\hat{\alpha}+4/3)}{2\hat{\alpha}-1}  sepcost(I', {\bf X}^*).$$

\medskip

For the last claim let ${\bf X}^A$ be the sequence obtained by {\tt GreedyPSP} (Algorithm \ref{alg:appendix})
when it is executed on instance $(I_P,B)$.

\medskip

%

\noindent {\bf Claim 3.}  There exists an execution of procedure {\tt GreedyPSP} (Algorithm \ref{alg:appendix})
on instance $(I',B)$ which returns a sequence ${\bf Z}$ satisfying
$sepcost_B(I_P,{\bf X}^A) \leq 2 sepcost(I',{\bf Z})$.


Let $X^A_i$ be the $i$-th test of sequence ${\bf X}^A$ 
and let $X^A_{r+1}$ be the first test of ${\bf X^A}$
that is  not the  test which maximizes 
$p(U\cap X) /c(X) $ among all the tests in ${\cal X}$ in line (*)
of Algorithm \ref{alg:appendix}.  Note that $X^A_{r+1}$ is chosen
by Algorithm \ref{alg:appendix} rather than $Y$, the test which maximizes $p(U\cap X) /c(X)$,
because  $Y$  has cost larger than remaining budget $z = B - \sum_{j=1}^r c(X_j^A)$.
The case where $X^A_{r+1}$ does not
exist is easier to handle and will be discussed at the end of the proof.
Because $X^A_1,\ldots,X^A_{r}$ is a prefix of ${\bf X}^A$ we have  $$sepcost_B(I_P,{\bf X}^A) \leq  
sepcost_B(I_P, \langle X^A_1,\ldots,X^A_{r}\rangle).$$
Thus, to establish the claim it suffices to show that   
$$sepcost_B(I_P, \langle X^A_1,\ldots,X^A_{r} \rangle) \leq  2 sepcost(I',{\bf Z}),$$ where ${\bf Z}$ is  
a possible output of {\tt GreedyPSP} (Algorithm \ref{alg:appendix})
on instance $(I',B).$ 

For $j = 1, \dots, r$, let ${\bf Z}^{[j]} =  \langle Z^{[j]}_1, \dots, Z^{[j]}_{n(X_j^A)}\rangle$ be a sequence of tests defined by some permutation of the $n(X^A_j)$ tests in ${\cal X}'$, generated from $X_j^A.$ 

Let $z = B - \sum_{j=1}^r c(X_j^A)$ and ${\bf Z}^{[r+1]} =  \langle Z^{[r+1]}_1, \dots, Z^{[r+1]}_{2z}\rangle$ 
be a sequence of $2 z$ of the $n(Y) = 2c(Y) > 2z$ tests in ${\cal X}'$, generated from $Y.$ 
The proof of the following proposition is deferred to section \ref{app:prop}.

\begin{proposition} \label{prop:claim3-apendix}
Let ${\bf Z} =  {\bf Z}^{[1]} \, {\bf Z}^{[2]}\, \dots \, {\bf Z}^{[r]} \, {\bf Z}^{[r+1]}$ be
the sequence obtained by the juxtaposition of the sequences ${\bf Z}^{[1]}, \dots, {\bf Z}^{[r+1]}.$ Then, for  
${\bf Z}$ the following conditions hold:
\begin{itemize}
\item[(i)] for each $j=1, \dots, r+1$ and $\kappa \neq \kappa' \in \{1, \dots, n(X^A_j) = 2c(X^A_j)\}$, 
it holds that $$Z^{[j]}_{\kappa} \cap Z^{[j]}_{\kappa'} = \emptyset$$

\item[(ii)] For each $j=1, \dots, r+1$ and each test  $Q$  in ${\cal X}',$ with $X$ being the test in ${\cal X}$ from which 
 $Q$ is generated, it holds that
$$
\frac{p(Q - \bigcup_{i =1}^{j-1} \bigcup_{\kappa'=1}^{n(X^A_{i})} Z^{[i]}_{\kappa'} )}{c(Q)} 
= \frac{p(X - \bigcup_{i=1}^{j-1} X^A_{i})}{c(X)}
$$

\item[(iii)] ${\bf Z}$ is a feasible output for  {\tt GreedyPSP} (Algorithm \ref{alg:appendix}) on instance $(I',B).$
\end{itemize}
\end{proposition}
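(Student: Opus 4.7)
The plan is to exploit the structural properties of the construction of $I'$---specifically property (a), which says that each test generated from $X \in {\cal X}$ covers exactly $N/n(X)$ objects generated from each $s \in X$, with each such object covered by exactly one test generated from $X$---together with the definition of $X_1^A, \ldots, X_r^A$ (and $Y$) as global greedy maximizers on $I_P$. Part (i) is immediate: since each object generated from $s$ is covered by exactly one test generated from the underlying $X^A_j$ (respectively $Y$ when $j=r+1$), two distinct tests in $\mathbf{Z}^{[j]}$ cover disjoint subsets of $S'$.

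Part (ii) follows by combining (i) with property (a). By (i) applied across all earlier blocks, $\bigcup_{i=1}^{j-1}\bigcup_{\kappa'=1}^{n(X_i^A)} Z^{[i]}_{\kappa'}$ coincides with the set of all objects in $S'$ generated from elements of $\bigcup_{i<j} X_i^A$. Therefore $Q$ minus this union consists of those objects in $Q$ generated from elements of $X \setminus \bigcup_{i<j} X_i^A$. Property (a) gives that for each such $s$, the $Q$-covered objects generated from $s$ have total probability $p(s)/n(X)$. Summing over these $s$ and dividing by $c(Q) = 1/2$, and using $n(X) = 2c(X)$, yields the identity.

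For part (iii), I would show inductively that at every iteration of {\tt GreedyPSP} on $(I', B)$, the next test of $\mathbf{Z}$ is a permissible greedy selection under a suitable tie-breaking rule. Within a block $j \leq r$, every unused test of $\mathbf{Z}^{[j]}$ has, by (i) and (ii), ratio exactly $p(X_j^A \setminus \bigcup_{i<j} X_i^A)/c(X_j^A)$; crucially, this value does not change as we pick tests inside the block, because the selected and unselected tests within the block are pairwise disjoint. For any test $Q'$ outside $\mathbf{Z}^{[j]}$ generated from $X' \in {\cal X}$, the ratio is bounded above by its start-of-block value $p(X' \setminus \bigcup_{i<j} X_i^A)/c(X')$; by the definition of $r$, $X_j^A$ is the global maximizer of this quantity over all $X' \in {\cal X}$ (with no budget restriction), so $Q'$'s ratio cannot exceed the ratio of a remaining test in $\mathbf{Z}^{[j]}$. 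The block-$(r+1)$ case is analogous with $Y$ in place of $X_j^A$: $Y$ was the global maximizer at step $r+1$ of the original greedy and was discarded only because $c(Y) > z$, yet each of the $2z$ tests in $\mathbf{Z}^{[r+1]}$ costs only $1/2$ and so fits in the residual budget. Feasibility with respect to the budget is then the elementary calculation $\sum_{j=1}^r n(X_j^A)/2 + 2z/2 = \sum_{j=1}^r c(X_j^A) + z = B$.

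The hardest part will be making the argument in (iii) watertight: one must track the ratios of every test in ${\cal X}'$ through the entire greedy execution and show that the block-by-block commitment of $\mathbf{Z}$ is compatible with greedy choices at each step. The decisive observation is that disjointness within a block preserves intra-block ratios at the maximum value while out-of-block ratios can only decrease, so together with the inherited global-best property of $X_j^A$ (and $Y$), {\tt GreedyPSP} may indeed finish each block before moving on, matching $\mathbf{Z}$ exactly.
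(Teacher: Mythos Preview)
Your proposal is correct and follows essentially the same approach as the paper's proof: part (i) is immediate from property (a); part (ii) identifies $\bigcup_{i<j}\bigcup_{\kappa'} Z^{[i]}_{\kappa'}$ with the set of all objects generated from $\bigcup_{i<j} X^A_i$ and then counts using property (a); and part (iii) combines (i) and (ii) with the fact that $X^A_1,\ldots,X^A_r$ (and then $Y$) are the global greedy maximizers in $I_P$ to verify the greedy criterion at each step. The paper phrases (iii) as a single chain of inequalities per step, while you frame it as ``intra-block ratios are frozen at the maximum value, out-of-block ratios can only fall below their start-of-block value,'' but the underlying argument is identical.
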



First note that the sequence ${\bf Z}$ has length $2B$ and total cost $B$. This is easily verified by 
recalling that: (i) each test in the sequence ${\bf Z}$ has cost $1/2$; 
(ii) for each $j=1, \dots, r,$ the subsequence ${\bf Z}^{[j]}$ has length $n(X^A_j),$ hence 
$totcost(I', {\bf Z}^{[j]}) =  n(X^A_j)/2 = c(X^A_j)$; 
(iii) the subsequence ${\bf Z}^{[r+1]}$ has length 
$2z = 2B - 2\sum_{j=1}^r c(X_j^A) = 2(B - \sum_{j=1}^r |{\bf Z}^{[j]}|)$ hence 
$totcost(I', {\bf Z}^{[r+1]}) = z = B - \sum_{j=1}^r totcost(I', {\bf Z}^{[j]})$. 


Let $C_0=0$, and for $j=1, \dots, r,$ let  $C_j= \sum_{i=1}^j c(X^A_i)$. By the observations in the previous paragraph, we also 
have $C_j = totcost(I', <{\bf Z}^{[1]}\,\dots \, {\bf Z}^{[j]}>) = \sum_{i=1}^j \sum_{\kappa = 1}^{n(X^A_i)} c(Z_{\kappa}^{[i]}).$ 

By grouping objects which incur the same cost in ${\bf X},$ 
we can write $sepcost_B(I_P, X^A_1,\ldots,X^A_r)$ as follows
\begin{equation} \label{eq:sep_B-X}
 sepcost_B(I_P, \langle X^A_1,\ldots,X^A_r \rangle) = 
\sum_{j=1}^r C_j \cdot p\left(X^A_j - \bigcup_{i=1}^{j-1} X^A_i\right) + B \cdot p\left(S - \bigcup_{j=1}^{r} X^A_j\right)
\end{equation}

Analogously, we can compute $sepcost(I',{\bf Z})$ as follows:
\begin{eqnarray}
sepcost(I',{\bf Z}) &=& \nonumber
\sum_{j=1}^r \sum_{\kappa=1}^{n(X_j^A)} 
p\left(Z_{\kappa}^{[j]} - \left(\bigcup_{i=1}^{j-1} \bigcup_{\kappa'=1}^{n(X^A_i)} Z^{[i]}_{\kappa'} \right) 
- \left( \bigcup_{\kappa'=1}^{\kappa-1} Z^{[j]}_{\kappa'} \right)\right) \left ( C_{j-1} + \kappa \cdot \frac{1}{2} \right )\\
& &~~~~+ \sum_{\kappa = 1}^{2z} 
p\left(Z_{\kappa}^{[r+1]} - \left(\bigcup_{i=1}^{r} \bigcup_{\kappa'=1}^{n(X^A_i)} Z^{[i]}_{\kappa'} \right) 
- \left( \bigcup_{\kappa'=1}^{\kappa-1} Z^{[r+1]}_{\kappa'} \right)\right) \left ( C_{r} + \kappa \cdot \frac{1}{2} \right )
\nonumber
\\
& &~~~~+ B \cdot 
p\left(S' - \left(\bigcup_{i=1}^{r} \bigcup_{\kappa'=1}^{n(X^A_i)} Z^{[i]}_{\kappa'} \right) 
- \left( \bigcup_{\kappa=1}^{2z} Z^{[r+1]}_{\kappa} \right)\right), \label{eq:24}
\end{eqnarray}
where we have splitted ${\bf Z}$ into the objects covered by the subsequences 
${\bf Z}^{[1]}, \dots, {\bf Z}^{[r]},$ the objects covered by the subsequence ${\bf Z}^{[r+1]}$ 
 and the remaining objects. 
In the above expressions, the term 
$Z_{\kappa}^{[j]} - \left(\bigcup_{i=1}^{j-1} \bigcup_{\kappa'=1}^{n(X^A_i)} Z^{[i]}_{\kappa'} \right) 
- \left( \bigcup_{\kappa'=1}^{\kappa-1} Z^{[j]}_{\kappa'} \right)$
represents the set of objects covered by $Z_{\kappa}^{[j]}$ and not covered by any of the preceding tests in 
${\bf Z}.$ The cost of separating each of these objects is the sum of the costs of all the 
tests performed up to $Z_{\kappa}^{[j]}$, i.e.,  
$\sum_{i=1}^{j-1} totcost(I', {\bf Z}^{[i]}) + \kappa/2 = C_{j-1} + \kappa/2.$
 
Now, we notice that for each 
$j=1, \dots, r+1$ and $1 \leq \kappa \leq \min\{n(X^A_j), 2z\}$ the set of objects covered by
$Z^{[j]}_{\kappa}$ and not covered by the previous tests are
$$ Z_{\kappa}^{[j]} - \left(\bigcup_{i=1}^{j-1} \bigcup_{\kappa'=1}^{n(X^A_i)} Z^{[i]}_{\kappa'} \right) 
- \left( \bigcup_{\kappa'=1}^{\kappa-1} Z^{[j]}_{\kappa'} \right) = 
Z_{\kappa}^{[j]} - \left(\bigcup_{i=1}^{j-1} \bigcup_{\kappa'=1}^{n(X^A_i)} Z^{[i]}_{\kappa'} \right)$$
where the equality follows because by Proposition \ref{prop:claim3-apendix} (i) we have that $Z_{\kappa}^{[j]} \cap \left( \bigcup_{\kappa'=1}^{\kappa-1} Z^{[j]}_{\kappa'} \right) = \emptyset.$

Moreover, by Proposition \ref{prop:claim3-apendix} (ii) we have that 
$$p\left(Z_{\kappa}^{[j]} - \left(\bigcup_{i=1}^{j-1} \bigcup_{\kappa'=1}^{n(X^A_i)} Z^{[i]}_{\kappa'} \right)\right) = 
\frac{p\left(X^A_j - \bigcup_{i=1}^{j-1} X^A_i \right)}{n(X^A_j)}.$$ 
Finally, the set
$$ R= S' - \left(\bigcup_{i=1}^{r} \bigcup_{\kappa'=1}^{n(X^A_i)} Z^{[i]}_{\kappa'} \right) 
- \left( \bigcup_{\kappa=1}^{2z} Z^{[r+1]}_{\kappa} \right) $$
that appears  in the third term of the righthand side of (\ref{eq:24})
can be spilt into 
the objects covered by the tests generated from $Y$  and the remaining ones.
By using arguments similar to those employed above one can realize that 
the objects covered by the tests generated from $Y$ 
are  exactly those  generated  by the objects  in  $Y - \bigcup_{i=1}^{r} X^A_i$
that are not covered by the tests in $\left( \bigcup_{\kappa=1}^{2z} Z^{[r+1]}_{\kappa} \right)$.
Thus, their contribution to $p(R)$ is
$\left ( \frac{n(Y)-2z}{n(Y)} \right ) p(Y - \bigcup_{i=1}^{r} X^A_i) $.
On the other hand, 
the contribution to $p(R)$ of the remaining  objects  is $p(S - Y - \bigcup_{j=1}^r X^A_j).$

Therefore, we can rewrite (\ref{eq:24}) as follows

\begin{eqnarray}
sepcost(I',{\bf Z}) &=& 
\sum_{j=1}^r \sum_{\kappa=1}^{n(X_j^A)} 
\frac{p\left(X^A_j - \bigcup_{i=1}^{j-1} X^A_i \right)}{n(X^A_j)} \left ( C_{j-1} + \kappa \cdot \frac{1}{2} \right )
\nonumber
\\
& &~~~~+ \sum_{\kappa = 1}^{2z} 
\frac{p\left(Y - \bigcup_{i=1}^{r} X^A_i \right)}{n(Y)} \left ( C_{r} + \kappa \cdot \frac{1}{2} \right )
\nonumber
\\
& &~~~~+ \frac{n(Y)-2z}{n(Y)} p \left ( Y - \bigcup_{i=1}^{r} X^A_i \right ) \cdot B + 
p \left (S - Y - \bigcup_{j=1}^r X^A_j \right ) B \label{eq:25}
\end{eqnarray}

Via simple algebraic manipulation on the first term in the right hand side of (\ref{eq:25}) we have that
$$\sum_{\kappa=1}^{n(X_j^A)} \frac{1}{n(X^A_j)} \left ( C_{j-1} + \kappa \cdot \frac{1}{2} \right )
= C_{j-1} + \frac{n(X_j^A) + 1}{4} \geq C_{j-1} + \frac{c(X_j^A)}{2},$$
and, analogously, for the second term in the right hand side of (\ref{eq:25}) we have
$$\frac{1}{n(Y)} \sum_{\kappa = 1}^{2z}  \left ( C_{r} + \kappa \cdot \frac{1}{2} \right ) =
\frac{2z}{n(Y)} \left(C_r + \frac{2z+1}{4}\right) \geq \frac{2z}{n(Y)} \frac{B + C_r}{2}.$$

Hence, we have 

\begin{eqnarray}
sepcost(I',{\bf Z}) &\geq& 
\sum_{j=1}^r  p\left(X^A_j - \bigcup_{i=1}^{j-1} X^A_i \right)
 \left ( C_{j-1} + \frac{c(X_j^A)}{2} \right )
+ p\left(Y - \bigcup_{i=1}^{r} X^A_i \right) \frac{2z}{n(Y)} \frac{B + C_r}{2}
\nonumber
\\
& &~~~~+ \frac{n(Y)-2z}{n(Y)} p \left ( Y - \bigcup_{i=1}^{r} X^A_i \right ) \cdot B + 
p \left ( S - Y - \bigcup_{j=1}^r X^A_j \right ) B \label{eq:26}
\end{eqnarray}

Finally, we observe that $C_{j-1} + \frac{c(X_j^A)}{2} \geq C_j/2$ and $(B+C_r)/2 \geq B/2.$ 
Then, the sum of the second and third term in the right hand side of (\ref{eq:26}) can be lower bounded 
with $p(Y-\bigcup_{j=1}^r X^A_j) \cdot B/2$ and we get

\begin{equation}
sepcost(I',{\bf Z}) \geq 
\sum_{j=1}^r  p\left(X^A_j - \bigcup_{i=1}^{j-1} X^A_i \right) 
\frac{C_{j}}{2} 
+ p\left(Y - \bigcup_{i=1}^{r} X^A_i \right) \frac{B}{2} +
p(S - Y - \bigcup_{j=1}^r X^A_j) B \label{eq:27}
\end{equation}

Putting together (\ref{eq:27}) and (\ref{eq:sep_B-X}) we have the desired result

$$sepcost(I',{\bf Z}) \geq 2 sepcost_B(I_P, <X^A_1, \dots, X^A_r>).$$

\bigskip

It remains to argue about the case where  $X^A_{r+1}$ does not exist,
which means that all tests that maximize the greedy criteria in Algorithm
\ref{alg:appendix} have cost smaller than the current budget $B$.
In this case, the analysis becomes simpler and  be easily handled in the same way as above. 
In fact, the only difference is that 
the last term in (\ref{eq:sep_B-X}) disappears, as well as all the terms referring to $Y$ and ${\bf Z}^{[r+1]}.$ 

The lemma follows from the correctness of the three claims.
\end{proof}

\section{The Proof of Proposition \ref{prop:claim3-apendix}} \label{app:prop}

\noindent
{\bf Proposition \ref{prop:claim3-apendix}.}
{\em Let ${\bf Z} =  {\bf Z}^{[1]} \, {\bf Z}^{[2]}\, \dots \, {\bf Z}^{[r]} \, {\bf Z}^{[r+1]}$ be
the sequence obtained by the juxtaposition of the sequences ${\bf Z}^{[1]}, \dots, {\bf Z}^{[r+1]}.$ Then, for  
${\bf Z}$ the following conditions hold:
\begin{itemize}
\item[(i)] for each $j=1, \dots, r+1$ and $\kappa \neq \kappa' \in \{1, \dots, n(X^A_j) = 2c(X^A_j)\}$, 
it holds that $$Z^{[j]}_{\kappa} \cap Z^{[j]}_{\kappa'} = \emptyset$$

\item[(ii)] For each $j=1, \dots, r+1$ and each test $Q$  in ${\cal X}',$ with $X$ being the test in ${\cal X}$ from which 
 $Q$ is generated, it holds that
$$
\frac{p(Q - \bigcup_{i =1}^{j-1} \bigcup_{\kappa'=1}^{n(X^A_{i})} Z^{[i]}_{\kappa'} )}{c(Q)} 
= \frac{p(X - \bigcup_{i=1}^{j-1} X^A_{i})}{c(X)}
$$

\item[(iii)] ${\bf Z}$ is a feasible output for  {\tt GreedyPSP} (Algorithm \ref{alg:appendix}) on instance $(I',B).$
\end{itemize}
}
\begin{proof}

\noindent
Item (i) is a direct consequence of property (a) of the instance $I'$.

\medskip

\noindent
In order to prove (ii), we observe that, from the definition of the sequences ${\bf Z}^{[i]}$ ($i=1, \dots, r$), it follows
that the elements of  $W = \bigcup_{i=1}^{j-1} \bigcup_{\kappa'=1}^{n(X^A_i)} Z^{[i]}_{\kappa'}$ 
are all the elements in $S'$ which are generated from $\bigcup_{i=1}^{j-1} X^A_i.$
Therefore, the elements of $Q - W$ are precisely the elements of $Q$ which are 
generated from $X - \bigcup_{i=1}^{j-1} X^A_i.$ 

For each $s \in X - \bigcup_{i=1}^{j-1} X^A_i,$  there are
precisely $\frac{N}{n(X)}$ elements  in $Q$ that are generated from $s$,  and each one of them 
has probability $p(s)/N.$ Hence we have 
$$\frac{p \left (Q - \bigcup_{i =1}^{j-1} \bigcup_{\kappa'=1}^{n(X^A_{i})} Z^{[i]}_{\kappa'} \right )}{c(Q)} = \frac{1}{c(Q)}\sum_{s \in X - \bigcup_{i=1}^{j-1} X^A_i} \frac{N}{n(X)}\frac{p(s)}{N} =
\frac{1}{c(Q) n(X)} \sum_{s \in X - \bigcup_{i=1}^{j-1} X^A_i} p(s),$$
from which we have (ii), since $1/c(Q)n(X) = 2/n(X) = c(X).$

\medskip

\noindent
In order to prove (iii) it is enough to show that the following claim holds.

\noindent
{\em Claim.} For each $j = 1, \dots, r+1$ and 
$\kappa = 1, \dots, \min\{2z, n(X^A_j)\}$ we have that 
\begin{equation} \label{eq:Z-greedy}
\frac{p\left(Z_{\kappa}^{[j]} - \left(\bigcup_{i=1}^{j-1} \bigcup_{\kappa'=1}^{n(X^A_i)} Z^{[i]}_{\kappa'} \right) 
- \left( \bigcup_{\kappa'=1}^{\kappa-1} Z^{[j]}_{\kappa'} \right)\right)}{c(Z_{\kappa}^{[j]})} 
\geq 
\frac{p\left(Q - \left(\bigcup_{i=1}^{j-1} \bigcup_{\kappa'=1}^{n(X^A_i)} Z^{[i]}_{\kappa'} \right) 
- \left( \bigcup_{\kappa'=1}^{\kappa-1} Z^{[j]}_{\kappa'} \right)\right)}{c(Q)} 
\end{equation}
for any $Q \in {\cal X}'.$

\medskip

This claim  says that, for each $j=1, \dots, r+1$ and $1 \leq \kappa \leq \min\{2z, n(X^A_j)\},$ 
 if ${\bf Z}$ has been constructed up to the test preceding $Z_{\kappa}^{[j]}$ then with respect to the tests already
chosen, the test  $Z_{\kappa}^{[j]}$ satisfies the greedy criterium of procedure {\tt GreedyPSP}. This implies that 
${\bf Z}$ is a feasible output for {\tt GreedyPSP}, as desired.

\bigskip

\noindent
{\em Proof of the Claim.}
Let $R$  be the quantity on the right hand side of (\ref{eq:Z-greedy}), and $X$ be the test in ${\cal X}$ 
from which $Q$ is generated. Then we have
\begin{eqnarray}
R &\leq& 
\frac{p\left(Q - \left(\bigcup_{i=1}^{j-1} \bigcup_{\kappa'=1}^{n(X^A_i)} Z^{[i]}_{\kappa'} \right)\right)}
{c(Q)}  \label{ineq:greedy-Z-1}\\
&=& \frac{p(X - \bigcup_{i=1}^{j-1} X^A_{i})}{c(X)} \label{ineq:greedy-Z-2} \\
&\leq& 
\frac{p\left(Z_{\kappa}^{[j]} - \left(\bigcup_{i=1}^{j-1} \bigcup_{\kappa'=1}^{n(X^A_i)} Z^{[i]}_{\kappa'} \right) 
\right)}{c(Z_{\kappa}^{[j]})} \label{ineq:greedy-Z-3}\\
&=& 
\frac{p\left(Z_{\kappa}^{[j]} - \left(\bigcup_{i=1}^{j-1} \bigcup_{\kappa'=1}^{n(X^A_i)} Z^{[i]}_{\kappa'} \right) 
- \left( \bigcup_{\kappa'=1}^{\kappa-1} Z^{[j]}_{\kappa'} \right)\right)}{c(Z_{\kappa}^{[j]})} 
\label{ineq:greedy-Z-4}
\end{eqnarray}

Inequality (\ref{ineq:greedy-Z-1}) holds since the set whose probability is 
considered at the numerator of the right hand side of (\ref{ineq:greedy-Z-1}) is a superset of the set 
whose probability is 
considered at the numerator of the right hand side of (\ref{eq:Z-greedy}).

Inequality  (\ref{ineq:greedy-Z-2}) follows from (\ref{ineq:greedy-Z-1}) by property (ii) above.

In order to prove (\ref{ineq:greedy-Z-3}) we consider two cases, according to whether $j = r+1$ or $j < r+1.$

If $j < r+1$, the first inequality below follows from the greedy choice  

$$\frac{p(X - \bigcup_{i=1}^{j-1} X^A_{i})}{c(X)} \leq 
\frac{p(X^A_j - \bigcup_{i=1}^{j-1} X^A_{i})}{c(X)} = 
\frac{p\left(Z_{\kappa}^{[j]} - \left(\bigcup_{i=1}^{j-1} \bigcup_{\kappa'=1}^{n(X^A_i)} Z^{[i]}_{\kappa'} \right) 
\right)}{c(Z_{\kappa}^{[j]})} $$
and  the last equality follows from property (ii) of the  proposition under analysis.

If $j = r+1$ we have that, by definition\footnote{Recall that $Y$ is the test in ${\cal X}$ which maximizes the
greedy criterium, but is not chosen because it exceeds the available budget} of $Y$ and the sequence ${\bf Z}^{[r+1]},$ it holds that  
$$\frac{p(X - \bigcup_{i=1}^{j-1} X^A_{i})}{c(X)} \leq 
\frac{p(Y - \bigcup_{i=1}^{j-1} X^A_{i})}{c(X)} = 
\frac{p\left(Z_{\kappa}^{[j]} - \left(\bigcup_{i=1}^{j-1} \bigcup_{\kappa'=1}^{n(X^A_i)} Z^{[i]}_{\kappa'} \right) 
\right)}{c(Z_{\kappa}^{[j]})} $$
where the last equality follows from property (ii) above.

Finally, (\ref{ineq:greedy-Z-4}) follows from (\ref{ineq:greedy-Z-3}) because of property (i) above, from which 
we have that  $Z_{\kappa}^{[j]} \cap \left( \bigcup_{\kappa'=1}^{\kappa-1} Z^{[j]}_{\kappa'} \right) = \emptyset$
hence, \\
$p\left(Z_{\kappa}^{[j]} - \left(\bigcup_{i=1}^{j-1} \bigcup_{\kappa'=1}^{n(X^A_i)} Z^{[i]}_{\kappa'} \right) 
- \left( \bigcup_{\kappa'=1}^{\kappa-1} Z^{[j]}_{\kappa'} \right)\right) = 
p\left(Z_{\kappa}^{[j]} - \left(\bigcup_{i=1}^{j-1} \bigcup_{\kappa'=1}^{n(X^A_i)} Z^{[i]}_{\kappa'} \right)\right).$

\end{proof}

\end{document}